\newcommand{\spa}{\,\,\,\!\!}
\newcommand{\figw}{0.44\columnwidth}
\newcommand{\figwd}{0.6\columnwidth}
\newcommand{\figwe}{0.34\columnwidth}
\newcommand{\vup}{\vspace{-1mm}}
\newcommand{\be}{\begin{equation}}
\newcommand{\ee}{\end{equation}}
\newcommand{\bea}{\begin{eqnarray}}
\newcommand{\eea}{\end{eqnarray}}
 \theoremstyle{break}   
 \newtheorem{teo}{Theorem}
\begin{document}
\sloppy
\setcounter{page}{0}
\baselineskip 21pt
\title{An Algorithmic Solution for\\ Computing Circle Intersection Areas and\\ its Applications to Wireless Communications}
\author{\vspace{3mm}Federico Librino, Marco Levorato and Michele Zorzi\\
Department of Information Engineering, University of Padova\\
E--mail: \texttt{\{librinof,levorato,zorzi\}@dei.unipd.it \vspace{-2mm} }
\thanks{This work was partially supported by Qualcomm, Incorporated. Part of this work has been presented at WiOpt 2009, Seoul, South Korea.} \\
}

\date{}
\maketitle
\pagestyle{empty}
\thispagestyle{empty}
\begin{abstract}
A novel iterative algorithm for the efficient computation of the intersection areas of an arbitrary number of circles
is presented. The algorithm, based on a trellis-structure, hinges on two geometric results which allow the existence-check
and the computation of the area of the intersection regions generated by more than three circles by simple algebraic manipulations
of the intersection areas of a smaller number of circles. The presented algorithm is a powerful tool for the performance
analysis of wireless networks, and finds many applications, ranging from sensor to cellular networks.
As an example of practical application, an insightful study of the uplink outage probability of in a wireless network with 
cooperative access points as a function of the transmission power and access point density is presented.
\end{abstract}

\section{Introduction and Problem Statement}
The computation of the intersection area of many circles is a challenging problem. While
the intersection of two circles is straightforward, even three circles admit several configurations,
each resulting in a different expression for the intersection area. Given the centers and the radii
of the circles, the automatic discrimination among the various cases requires involved condition testing.
If we consider cases with several circles the problem may appear unsolvable, as the close-form expressions for the
intersection areas become more and more involved and depend on the specific configuration among a huge number
of possibilities.

Despite the wide range of applications of this geometric problem, a systematic approach is still lacking. It has been addressed in the literature for three circles, but only for some specific configurations, in~\cite{trecerchi}. The intersections among $d$-dimensional balls are used to find their union in~\cite{uniball}, in a way analogous to the last step of our proposed algorithm, whereas \cite{equdisk} analyzes the special case of equal circular disks, showing that their intersection can be derived from intersection areas among fewer circles. However, no algorithmic solution is proposed to exploit the result in an organized and exhaustive way.

In this paper, an algorithm that efficiently computes the intersections of an arbitrary number of
circles is presented. The algorithm works in an iterative fashion and is based on a trellis structure. At each iteration, the existence of any
intersection is checked based on the areas computed in the previous steps, thus highly reducing the computational load.
Moreover, only the first three steps involve geometric considerations, whereas, when the number
of circles is higher than three, all the areas can be found via simple algebraic calculations.
The presented algorithm allows to efficiently solve configurations with many tens of circles, without any
assumption on the centers and radii of the considered circles.

The technical contributions of this paper are as follows:
\begin{itemize}
\item we derive two theorems, that provide an easy way to check the existence and calculate the area of the intersection
region of more than three circles, once the existence and the area of the intersections involving a smaller number of circles are known;
\item we present a trellis-based iterative algorithm that allows an easy computation of the wanted areas even for configurations with a large number of circles.
\end{itemize}

In the following, the applications of the presented tool in wireless networks' performance analysis are discussed, and then
the geometric problem is formally defined.  

\subsection{Applications to Wireless Networks}
Many frameworks for the evaluation, analysis and simulation of wireless communications are based on 
a signal propagation model in which the attenuation incurred by a transmitted signal is a monotonically decreasing
function of the distance from its source. Thus, the performance of a receiver is a function of its distance from
the source, and this leads to a characterization of wireless networks based on the concept of \emph{coverage range}.

The coverage range of a transmission can be defined by assigning a threshold bit error rate (BER), packet error rate (PER) or signal-to-noise-ratio (SNR) which determines an admissible region of received power. The coverage range then is the maximum distance
between two nodes which guarantees the received power to lie within the admissible region. The resulting coverage area
of a transmitter (receiver), is a circle centered on the receiver (transmitter) and with radius equal to the coverage range.
Note that different transmission power levels, packet encoding rate and, in general, transmission parameters
can be represented as multiple circles centered on the same node.

A node placed in a point of the plane covered by multiple coverage areas can communicate with all the nodes associated
with those coverage areas. The computation of the area of those regions enables a wide range of considerations in many
scenarios of interest.

In cellular networks, circular coverage areas may be used to design base station positioning in order to guarantee
connectivity~\cite{cellwirtz}. Recently, considerable attention has been devoted to the study of relaying strategies in
multihop cellular networks to improve capacity, coverage range and Quality-of-Service fairness (\emph{e.g.}, see~\cite{relay1,relay2,relay3}).
Circular coverage areas of base stations and relays \footnote{Multiple circles
associated with each base station/relay can be used to account for coverage shrinking as the number of mobiles increases.}
can be used to build a simple connectivity model aimed at the calculation of the overall capacity of the cell~\cite{relay5}.
The ability to calculate the area of the various intersections of the coverage areas granted by the proposed algorithm  
may be used to compute the probability that a mobile falls within coverage of a certain set of base stations/relays.

The computation of the intersection areas may also be used to model connectivity in many other infrastructured network scenarios. For instance, in
heterogeneous networks, the areas covered by different network infrastructures (GSM, UMTS, local area networks,
and so on) may intersect. Thus, the areas covered by multiple technologies may be used in order to allocate users and compute the average performance. This problem has been recently investigated in~\cite{stocgeo} for downlink \textit{K}--tier cellular networks.

In non-infrastructured ad hoc networks, circles have been traditionally used to characterize channel sensing and data packet decoding.
Again, given a topology, the area of the regions in which a new transmitter detects/decodes signals from the various sources can be computed
using the proposed algorithm.\footnote{For instance, the algorithm can be directly applied to computing the probability that a node with uniform spatial distribution falls within a region connecting other nodes, or becomes a hidden or exposed terminal.}

In sensor networks, localization relies on the reception
of beacons sent by nodes whose positions are known. The accuracy achieved by the
localization algorithm depends on the number of beacon sources that the node can hear. This requires
the computation of the probability that a node falls within an area covered by a certain number
of circles. Furthermore, intersections of multiple circles are also found when addressing the problem of
preserving complete sensing coverage of a certain area and connectivity~\cite{scov,cerchicover}.  

Another important example in which intersection areas are a fundamental aspect of the performance analysis is routing~\cite{svs1}. The
intersection of the circles may represent the area in which a user can provide connectivity to
some nodes of the network (corresponding to the centers of the various circles). When considering geographic packet
forwarding~\cite{paoloroute,direrout}, intersection areas may be helpful to derive the distribution of the advancement and the success probability
of the communication.

In this paper, as an example of application of the presented tool, we study the uplink outage probability in a wireless network with 
cooperative access points as a function of the transmission power and access point density. An analogous scenario can be found in cellular networks, where 
recent work showed that cooperation among Base Stations may offer considerable performance gain.
Multi-cell processing (MCP) has been proven to grant higher throughput and achievable data rate~\cite{simeone1,simeone2}, depending on the
topology as well as on the robustness of the backhaul links. Capacity may be also increased, as was shown in~\cite{capBScoop}, when
cooperation is aimed at cancelling interference. The use of relays, together with cooperating Base Stations, has been also considered
in~\cite{simeone3}. It is clear that the performance of MCP depends on how many access points are able to receive and
decode the transmission from a given source, which can be statistically determined by finding the intersections of the
coverage areas of Base Stations and relays.

In the following section, we state in detail the addressed geometric problem and the contribution of the paper.

\subsection{Problem Statement and Contribution}
\label{probstat}
Consider a set $\mathcal{C}=\{\gamma_1,\gamma_2,\ldots,\gamma_{N_c}\}$ of $N_c$ circles, whose centers and radii are known. 
The circles in $\mathcal{C}$ may partially overlap. We denote with 
\begin{equation}
\!\!\boldsymbol{\mathcal{I}}^{(n)}{=}\{\mathcal{I}^{(n)}_{\{i_1,\ldots,i_n\}},i_1,\ldots,i_n{\in}\{1,\ldots,N_c\}, i_j{\neq}i_u, {\rm for} j{\neq}u\}
\end{equation}
the set of all the possible intersection regions generated by $n$ circles, where $\mathcal{I}^{(n)}_{\{i_1,\ldots,i_n\}}{=}\bigcap_{i{\in}\{i_1,\ldots,i_n\}}\gamma_{i}$ is the set of the points that belong to all circles in $\{\gamma_{i_1},\gamma_{i_2},\ldots,\gamma_{i_n}\}{\subseteq}\mathcal{C}$. The set $\boldsymbol{\mathcal{I}}^{(1)}$ contains the circles in $\mathcal{C}$. We also define the notation $\mathcal{I}^{(n)}(i_1, \ldots, i_{N_c-n})$ to denote the intersection of $n$ circles out of the $N_c$ in $\mathcal{C}$ where circles $i_1, \ldots, i_{N_c-n}$ are not considered, i.e., $\mathcal{I}^{(n)}(i_1, \ldots, i_{N_c-n}){=}\bigcap_{i{\in}\{1,\ldots,N_c\}\setminus\{i_1,\ldots,i_{N_c-n}\}}\gamma_{i}$, which is hence equivalent to $\mathcal{I}^{(n)}_{\{1,\ldots,N_c\}\setminus\{i_1,\ldots,i_{N_c-n}\}}$.

However, these intersections are not disjoint regions of the plane. See for instance Fig.~\ref{fig:c2}, where a configuration with three
circles is depicted. In the figure, $\mathcal{I}^{(3)}_{\{1,2,3\}}{=}\mathcal{A}_1$,  $\mathcal{I}^{(2)}_{\{1,2\}}{=}\mathcal{A}_1{\cup}\mathcal{A}_4$,
$\mathcal{I}^{(2)}_{\{1,3\}}{=}\mathcal{A}_1{\cup}\mathcal{A}_2$ and $\mathcal{I}^{(2)}_{\{2,3\}}{=}\mathcal{A}_1{\cup}\mathcal{A}_3$.

We call the regions $\mathcal{A}_i$ in the figure \emph{exclusive intersection regions}, as they correspond to the intersection of a
certain subset of circles, excluding the regions covered by the other circles in $\mathcal{C}$. We denote these regions as $\mathcal{E}^{(n)}(i_1,\ldots,i_{N_c-n})$, where
\begin{equation}
\mathcal{E}^{(n)}(i_1,\ldots,i_{N_c{-}n})=\mathcal{I}^{(n)}(i_1,\ldots,i_{N_c{-}n}){\setminus}\bigcup_{e\in\{i_1,\ldots,i_{N_c{-}n}\}}\gamma_{e}.
\end{equation}  
For instance, $\mathcal{E}^{(1)}(2,3)$ is the region of the plane covered by $\gamma_1$ and that does not overlap with any other circle of $\mathcal{C}$
($\mathcal{A}_5$ in Fig.~\ref{fig:c2}), and $\mathcal{E}^{(2)}(3){=}\mathcal{A}_4$ is the intersection of $\gamma_1$ and $\gamma_2$,
excluding the area covered by $\gamma_3$. We define the set $\boldsymbol{\mathcal{E}}^{(n)}$
as the set of all the exclusive intersection regions generated by $n$ circles. 
\begin{figure}[t]
    \centering
    \includegraphics[width=\figw]{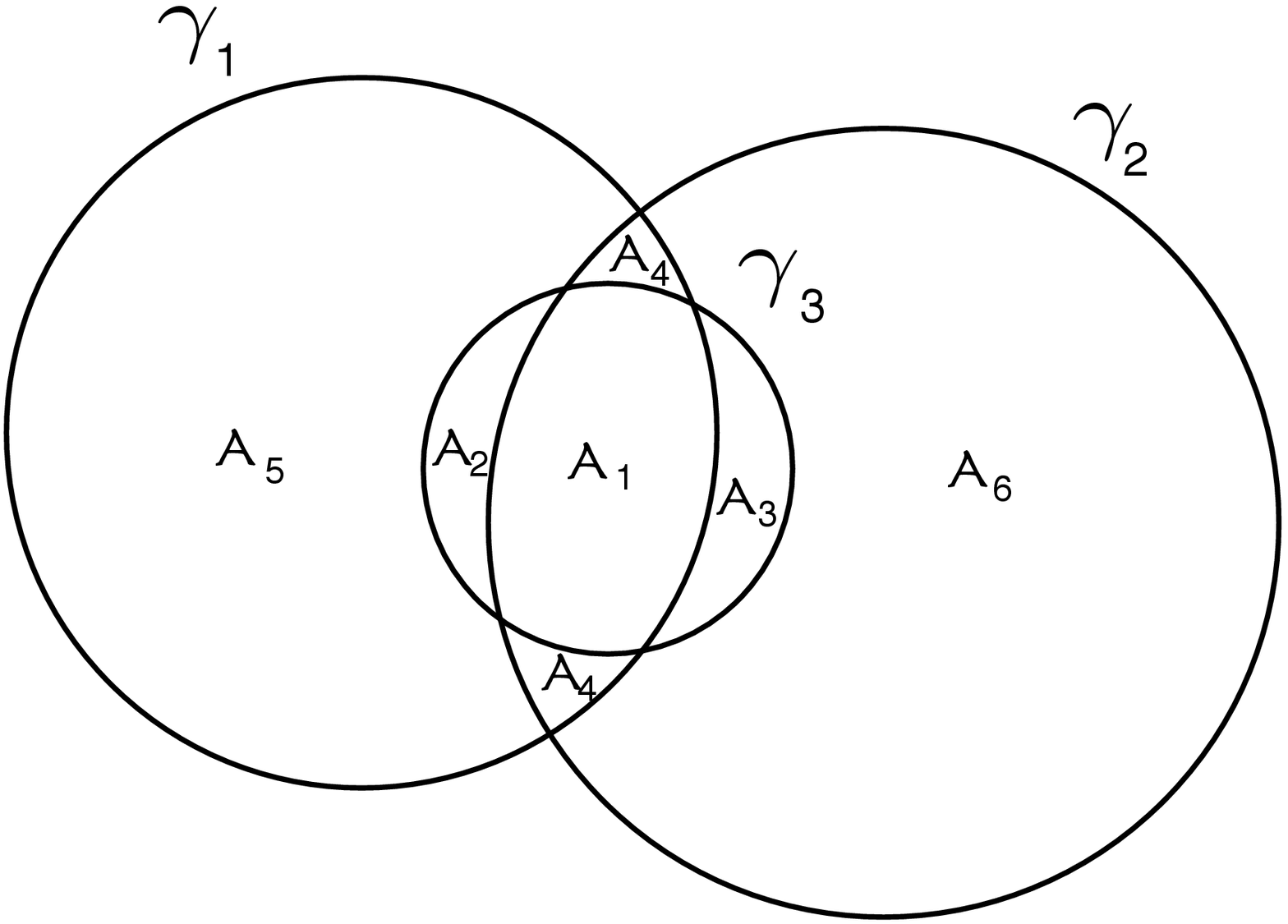}
     \caption{Example of configuration with three circles and identification of the regions of interest.}
  \label{fig:c2}
  \vup\vup\vup
\end{figure}
Let us define the set $\boldsymbol{\mathcal{E}}{=}\bigcup_{n=1,\ldots,N_c}\boldsymbol{\mathcal{E}}^{(n)}$. Then,
the elements of $\boldsymbol{\mathcal{E}}$ are disjoint regions that tessellate
the overall region covered by $\mathcal{C}$.
We also define the measure $\phi$, where $\phi(\mathcal{A})$ corresponds to the area of the region $\mathcal{A}$.

In this paper, we address the problem of computing the measure of the regions $\boldsymbol{\mathcal{E}}^{(n)}$, $n{=}1,\ldots,N_c$,
given the centers and the radii of the circles in $\mathcal{C}$. This appears to be an extremely complex geometrical problem.
In fact, while the area covered by the intersection of two circles has a simple measure, even the intersection area of three circles has a rather
involved form, that depends on the mutual positions of the circles~\cite{trecerchi}. When more than three circles are considered,
the number of configurations grows larger, and the complexity of the geometric conditions and the associated expression
of the intersection area become difficult to handle.

The rest of the paper is organized as follows. Section~\ref{geomres} presents the geometric results that are the foundation of the iterative algorithm.
In Section~\ref{descralg} we describe in detail the structure of the proposed algorithm. In Section~\ref{comparea} we show how the wanted areas can be computed. Finally, in Section~\ref{ex_netdes} we present and discuss the aforementioned network design problem.

\section{Geometric results}
\label{geomres}
In this section, we present the geometric results that represent the core of the proposed algorithm. A key observation is
the following: a necessary (but in general not sufficient) condition for the existence of
the intersection of $n$ circles is the existence of the intersection of all subsets of $n\!-\!1$ circles.\footnote{We say that an intersection \emph{exists} if it is non-empty.} This consideration
may be very useful, since the calculation of several areas among circles which are not all intersecting can be avoided.
Nonetheless, a stronger result can be stated if the number of considered circles is greater than $3$.

\begin{teo}
\label{teo:existn}
Consider a subset of $n{\leq}N_c$ circles $\mathcal{S}{=}\{\gamma_{i_1},\ldots,\gamma_{i_n}\}{\subseteq}\mathcal{C}$ and the associated intersection region $\mathcal{I}^{(n)}_{\{i_1,\ldots,i_n\}}$. With a slight abuse of notation, we refer to the considered subset of circles when denoting the intersection
regions, and we write $\mathcal{I}^{(n-\ell)}_{\{i_1,\ldots,i_n\}\setminus\{e_1,\ldots,e_{\ell}\}}{=}\mathcal{I}^{(n-\ell)}(e_1,\ldots,e_{\ell})$.

Then, $\forall \{i_1,\ldots,i_{n}\}, i_j{\neq} i_u$, for $j{\neq}u$, if $n{\geq}4$ the following holds:
if 
\begin{equation}
\phi(\mathcal{I}^{(n-1)}({k})) > 0 ~~ \forall k\in\{i_1,\ldots,i_n\},
\end{equation}
then also $\phi(\mathcal{I}^{(n)}_{\{i_1,\ldots,i_{n}\}})>0$.
\end{teo}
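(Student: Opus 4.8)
The plan is to recognize this statement as an instance of \emph{Helly's theorem} in the plane. Each $\gamma_i$ is a closed disk, hence a convex subset of $\R^2$, and any intersection of disks is again convex. For a convex region, having positive area is equivalent to having nonempty interior, and the interior of an intersection of disks is the intersection of the corresponding open disks. Thus the hypothesis $\phi(\mathcal{I}^{(n-1)}(k))>0$ for all $k$ says exactly that, after replacing each closed disk by its interior, every $(n-1)$-fold intersection obtained by deleting one circle from $\mathcal{S}$ is nonempty, and the desired conclusion is that the full intersection of all $n$ open disks is nonempty. I would run the entire argument with open disks and translate back to areas only at the end.

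First I would reduce the hypothesis to a statement about triples. Fix any three circles $\gamma_a,\gamma_b,\gamma_c\in\mathcal{S}$. Since $n\geq 4$, there is a fourth index $k\in\{i_1,\ldots,i_n\}$ with $k\notin\{a,b,c\}$; deleting $\gamma_k$ leaves an $(n-1)$-fold intersection that is nonempty by hypothesis. Because $\{a,b,c\}$ is contained in the surviving index set, this $(n-1)$-fold intersection is contained in the triple intersection of $\gamma_a,\gamma_b,\gamma_c$, so the latter is nonempty as well. As the three circles were arbitrary, \emph{every} triple of open disks in $\mathcal{S}$ has a common point.

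Next I would invoke Helly's theorem in $\R^2$: for a finite family of convex sets in the plane, if every three of them share a point, then the whole family has a common point. Applied to the $n$ open disks, this produces a point lying in all of them, hence in the interior of $\mathcal{I}^{(n)}_{\{i_1,\ldots,i_n\}}$; therefore this $n$-fold intersection has nonempty interior and positive area, which is precisely $\phi(\mathcal{I}^{(n)}_{\{i_1,\ldots,i_n\}})>0$.

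The main obstacle, and what explains the restriction $n\geq 4$, is the triple-reduction step rather than Helly itself. Helly in the plane requires the common-point property for triples, not pairs, and three disks can overlap pairwise while having empty common intersection. The hypothesis $n\geq 4$ is exactly what guarantees that every triple can be completed to an $(n-1)$-subset by deleting a circle outside it, so that the $(n-1)$-wise assumption ($n-1\geq 3$) supplies the triple property Helly needs; for $n=3$ the assumption concerns only pairs and the implication genuinely fails, which is where the bound is essential. A secondary point requiring care is the equivalence between positive area and nonempty interior, justified by convexity and handled uniformly by working with the open disks throughout.
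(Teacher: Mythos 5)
Your proof is correct, but it takes a genuinely different route from the paper. You reduce the hypothesis to the triple-intersection property (possible exactly because $n\geq 4$ allows every triple to be completed to an $(n-1)$-subset by deleting a circle outside it) and then invoke Helly's theorem for finite families of convex sets in $\R^2$, working with open disks so that positive area and nonempty intersection match up via the standard fact that a planar convex set has positive measure iff it has nonempty interior. The paper instead gives a self-contained geometric argument: it fixes one deleted circle $\overline{k}$, studies the curvilinear polygon $\Delta$ bounding $\mathcal{I}^{(n-1)}(\overline{k})$, and splits into three cases according to how many boundary arcs each circle contributes, producing a common point by intersecting a segment $\overline{PQ}$ (joining points guaranteed by the hypothesis) with suitable arcs of $\Delta$. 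Your approach buys brevity, uniformity and rigor: it handles all $n\geq 4$ identically, whereas the paper's case-two argument explicitly fails for $n=4$ (where all arcs of $\Delta$ are consecutive) and is patched only by the remark that the thesis ``can be proved in an analogous way''; it also generalizes immediately to arbitrary convex sets and to $\R^d$ with $(d+1)$-wise hypotheses. What the paper's longer route buys is the arc-bookkeeping machinery (the polygon $\Delta$, the sets $\boldsymbol{\alpha}(i)$, the consecutive/non-consecutive arc dichotomy), which is not incidental: it is reused essentially verbatim in the proof of Theorem~\ref{teo:intinun}, which is a structural statement about which two circles cover the intersection of the others and cannot be obtained from Helly. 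So the paper's proof is better integrated with the rest of the algorithm's correctness argument, while yours is the cleaner proof of this theorem in isolation.
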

In other words, the existence of the $n$ intersections among all the subsets of $n-1$ circles in $\mathcal{S}$, besides being an obvious necessary condition for the intersection among the $n$ circles to exist, is also sufficient when $n\geq 4$.

\begin{proof}
Consider a set of $n$ circles $\{\gamma_{i_1},\ldots,\gamma_{i_{n}}\}$. By hypothesis $\phi(\mathcal{I}^{(n-1)}(k)){>}0$, $\forall k{\in}\{i_1,\ldots,i_n\}$.
Fix the index $k{=}\overline{k}$ and define the set $\mathcal{S}_{\overline{k}}{=}\{i_1,\ldots,i_n\}{\setminus }\{\overline{k}\}$. As $\phi(\mathcal{I}^{(n-1)}(\overline{k}))>0$, then $\mathcal{I}^{(n-1)}(\overline{k}){\neq}\emptyset$. Call $\Delta$ the polygon\footnote{$\Delta$ is a particular polygon whose
sides are arcs of circumference.} delimiting $\mathcal{I}^{(n-1)}(\overline{k})$ and
whose sides are $m$ arcs of circumference $\alpha_j$, $j{=}1,\ldots, m$, with $1{\leq} m {\leq} 2(n{-}2)$. Note that more than one arc $\alpha_j$ may
belong to the same circle $\gamma_i$.\footnote{We say that an arc belongs to a circle when it corresponds to a portion of its circumference} We denote with $\boldsymbol{\alpha}(i)$ the set of arcs belonging to the circle $\gamma_i$.

We have to consider three cases:
\begin{enumerate}
 \item $\exists \gamma_i {\in} \mathcal{S}_{\overline{k}}: \; \boldsymbol{\alpha}(i) {=} \emptyset$;
 \item $\forall \gamma_i{\in} \mathcal{S}_{\overline{k}} \; |\boldsymbol{\alpha}(i)| {=} 1$;
 \item $\forall \gamma_i{\in} \mathcal{S}_{\overline{k}} \; \boldsymbol{\alpha}(i) {\neq} \emptyset$ and $\exists j:\; |\boldsymbol{\alpha}(j)| {>} 1$
\end{enumerate}

In the first case $\gamma_i$ fully contains the whole intersection but its circumference does not
hit it. Thus $\mathcal{I}^{(n-1)}(\overline{k}){=}\mathcal{I}^{(n-2)}(\overline{k},i){\cap}\gamma_i{=}\mathcal{I}^{(n-2)}(\overline{k},i)$. Since by assumption $\mathcal{I}^{(n-1)}(k){\neq}\emptyset$, $\forall k{\in}\{i_1,\ldots,i_n\}$,
then also $\mathcal{I}^{(n-1)}(i){=}\mathcal{I}^{(n-2)}(\overline{k},i)\cap\gamma_{\overline{k}}{\neq}\emptyset$. Therefore, $\mathcal{I}^{(n-1)}(\overline{k}){\cap}\gamma_{\overline{k}}{\neq}\emptyset$, and thus $\mathcal{I}^{(n)}_{\{i_1,\ldots,i_n\}}{\neq}\emptyset$ and $\phi(\mathcal{I}^{(n)}_{\{i_1,\ldots,i_n\}}){>}0$.

In the second case, the polygon $\Delta$ is delimited by exactly $n{-}1$ arcs of circumference, each belonging to one of the $n{-}1$ circles in $\mathcal{S}_{\overline{k}}$. This situation is depicted in Figure \ref{fig:c3}. Consider two non-consecutive arcs of $\Delta$, namely $\alpha_r$, $\alpha_t$, and assume, without any loss of generality, that they belong to the circles $\gamma_{i_1}$ and $\gamma_{i_2}$, with $\overline{k}{\neq}i_1$, $i_2$. By assumption, as
the intersection of any combination of $n{-}1$ circles exists, then $\gamma_{\overline{k}}$ must contain at least one point $P {\in} \mathcal{I}^{(n-2)}(i_1,\overline{k})$ and one point $Q {\in} \mathcal{I}^{(n-2)}(i_2,\overline{k})$. Since a circle is a convex figure, then the whole segment joining $P$ and $Q$ must be contained in $\gamma_{\overline{k}}$. Moreover, $P$ and $Q$ both belong to $\mathcal{I}^{(n-3)}(i_1,i_2,\overline{k})$, which is also a convex set (since it is the intersection of convex sets), and therefore the segment $V {=} \overline{PQ} {\subset} \mathcal{I}^{(n-3)}(i_1,i_2,\overline{k})$, and also $V {\subset} \mathcal{I}^{(n-2)}(i_1,i_2)$. Since $P{\in}\gamma_{i_2}$ and $Q{\notin}\gamma_{i_2}$, then $V {\cap} \gamma_{i_2}{\neq}\emptyset$, and analogously $V {\cap} \gamma_{i_1}{\neq}\emptyset$. It follows, from the fact that $V {\subset} \mathcal{I}^{(n-2)}(i_1,i_2)$, that $V$ cannot hit any circumference other than those of $\gamma_{i_1}$ and $\gamma_{i_2}$. In addition, the point of 
intersection between $V$ and the circumference of $\gamma_{i_1}$ must belong to the arc of this circumference contained in $\mathcal{I}^{(n-1)}(i_2)$. This arc is $\alpha_r$, since $\alpha_r$ and $\alpha_t$ are two non consecutive arcs of $\Delta$. Hence, there is one point of $\alpha_r$ which belongs to $V$. As a side of $\Delta$, $\alpha_r{\subset}\mathcal{I}^{(n-1)}(\overline{k})$, whereas $V {\subset} \gamma_{\overline{k}}$: this point of intersection then belongs to $\mathcal{I}^{(n)}_{\{i_1,\ldots,i_n\}}$, which therefore is a non-empty set. The same holds for the point of intersection between $V$ and the circumference of $\gamma_{i_2}$ and, consequently, for the fraction of $V$ between the two intersection points.

Finally, we observe that the found set has non-zero measure by construction,
and therefore $\phi(\mathcal{I}^{(n)}_{\{i_1,\ldots,i_n\}}){>}0$. 
Note that this proof does not hold for $n{=}4$, where all the arcs of $\Delta$ are consecutive. However, in this case the thesis can be proved in an analogous way.
\begin{figure}
    \centering
    \includegraphics[width=\figw]{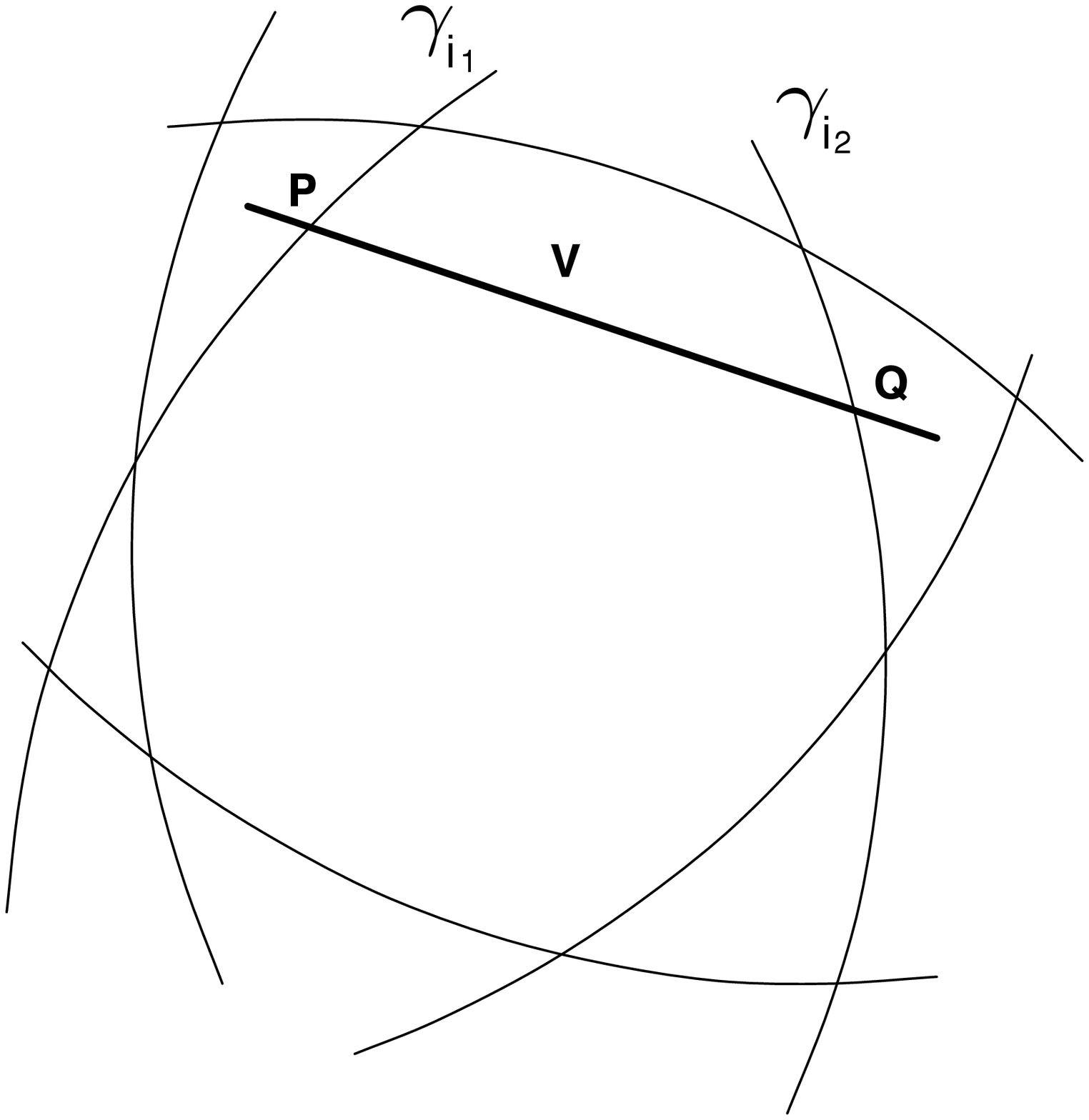}
     \caption{Intersection of $n$ circles. Here each circumference contains one arc belonging to the circular polygon $\Delta$, which delimits the intersection.}
  \label{fig:c3}
  \vup\vup\vup
\end{figure}

In the third case, there is a circle $\gamma_j$ whose associated set of arcs $\boldsymbol{\alpha}(j)$ contains at least two arcs of $\Delta$. It can be shown that two consecutive arcs cannot belong to the same circle. Once observed this, the proof is analogous to that of the previous case by choosing
$\alpha_t$ and $\alpha_r$ as non-consecutive arcs belonging to $\boldsymbol{\alpha}(j)$.
\end{proof}

The previous theorem represents a powerful tool for testing the existence of the intersection regions generated by $n$ circles, once those
generated by $n{-}1$ are known. In fact, the intersection of a set $\mathcal{S}$ of $n{\geq}4$ circles is non-empty if and only if all the intersections
of the subsets of $n{-}1$ circles of $\mathcal{S}$ are non-empty. This simple principle substitutes involved and time-demanding
geometric considerations.

The following theorem states an important property which can be used to derive an effective way to compute the area of any intersection of $n$ circles as a function of some specific intersections generated by $n{-}1$
and $n{-}2$ circles.

\begin{teo}
 \label{teo:intinun}
Consider a subset of $n$ circles $\mathcal{S}{=}\{\gamma_{i_1},\ldots,\gamma_{i_n}\}{\subseteq}\mathcal{C}$, with $4{\leq}n{\leq}N_c$, and the associated intersection region $\mathcal{I}^{(n)}_{\{i_1,\ldots,i_n\}}$, with $\phi(\mathcal{I}^{(n)}_{\{i_1,\ldots,i_n\}}){>}0$. Denote with $m$ the number of arcs of circumference that delimit $\mathcal{I}^{(n)}$.
Then, if $m\geq 4$, there exist two circles $\gamma_t,\gamma_r {\in} \mathcal{S}$ such that:
 \be
 \bigcap_{i\neq t,r} \gamma_i \subset \gamma_t \cup \gamma_r
 \ee
\end{teo}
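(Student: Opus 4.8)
The plan is to exploit the convexity of $\mathcal{I}^{(n)}=\bigcap_i\gamma_i$ (an intersection of disks) and to recast the desired inclusion as a condition on the two points where the boundaries of the chosen circles meet. For a candidate pair $\gamma_t,\gamma_r$, both of which I take to contribute an arc to $\partial\mathcal{I}^{(n)}$, write $R=\bigcap_{i\neq t,r}\gamma_i$; this set is again convex and contains $\mathcal{I}^{(n)}$. Two circles that both bound $\mathcal{I}^{(n)}$ can be neither disjoint (else $\mathcal{I}^{(n)}\subseteq\gamma_t\cap\gamma_r=\emptyset$, contradicting $\phi(\mathcal{I}^{(n)})>0$) nor nested (a nested circle is redundant and contributes no arc), so $\partial\gamma_t$ and $\partial\gamma_r$ cross in exactly two points $Z_1,Z_2$, the tips of the lens $\gamma_t\cap\gamma_r$. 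What must be shown is $R\subseteq\gamma_t\cup\gamma_r$, i.e. that no point of $R$ lies outside both circles; this is precisely the identity that makes $\phi(\mathcal{I}^{(n)})=\phi(\mathcal{I}^{(n-1)}(t))+\phi(\mathcal{I}^{(n-1)}(r))-\phi(R)$ valid, which is the intended use of the theorem.

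The key step I would establish is the equivalence: $R\subseteq\gamma_t\cup\gamma_r$ holds if and only if $Z_1,Z_2\notin\mathcal{I}^{(n)}$. Inside $R$ each circle cuts out a single arc, $\beta_t=\partial\gamma_t\cap R$ and $\beta_r=\partial\gamma_r\cap R$ (the circular sides of the convex sets $R\cap\gamma_t$ and $R\cap\gamma_r$), each joining two points of $\partial R$. If $Z_1,Z_2\notin\mathcal{I}^{(n)}$ then, since $Z_i\in\gamma_t\cap\gamma_r$, necessarily $Z_i\notin R$; hence $\beta_t$ and $\beta_r$ are disjoint, their endpoints do not interleave on $\partial R$, and the two arcs split $R$ into three consecutive regions. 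The middle region is $R\cap\gamma_t\cap\gamma_r=\mathcal{I}^{(n)}$, while the outer two lie respectively in $\gamma_r$ and in $\gamma_t$, so no region is outside both circles and $R\subseteq\gamma_t\cup\gamma_r$. Conversely, if some $Z_i\in\mathcal{I}^{(n)}$ then $\beta_t$ and $\beta_r$ cross there, and every neighbourhood of $Z_i$ in $R$ meets the region outside both circles.

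It therefore suffices to produce two arc-bearing circles whose two crossing points both lie outside $\mathcal{I}^{(n)}$, equivalently two circles that are \emph{nowhere adjacent} along $\partial\mathcal{I}^{(n)}$, since a vertex of $\mathcal{I}^{(n)}$ is exactly a crossing point lying on the boundary. Here the hypothesis $m\geq4$ enters. Consecutive arcs belong to distinct circles and any two circles meet in at most two points, so the circle-adjacency relation read off the cyclic list of the $m$ arcs cannot force \emph{every} pair to be adjacent: when each circle owns a single arc this relation is just the $m$-cycle, which for $m\geq4$ has non-adjacent vertices, and the number of adjacent pairs is in any case bounded by $m$. Choosing such a nowhere-adjacent pair as $\gamma_t,\gamma_r$, the equivalence above yields $R\subseteq\gamma_t\cup\gamma_r$.

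The hard part will be making this last existence step fully rigorous when some circle owns several arcs, because the cyclic adjacency pattern is then richer than a simple cycle. I expect to control it with the convex geometry of the configuration: the \emph{active circle} along $\partial\mathcal{I}^{(n)}$ traces the convex cells of the power diagram of the $\gamma_i$, each radical line is crossed at most twice by the convex curve $\partial\mathcal{I}^{(n)}$, and this should preclude a complete adjacency pattern once $m\geq4$. A second, easier point to handle with care is the topological claim used in the equivalence, namely that four nonempty sign-regions force $\beta_t$ and $\beta_r$ to cross; this is a Jordan-curve argument resting on the fact that each $\beta$ is a single arc joining two boundary points of the convex set $R$.
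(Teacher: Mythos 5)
There is a genuine gap: the ``if'' direction of your key equivalence is false, and the counterexample also defeats your final existence step. Take $n=4$. Let $\gamma_t,\gamma_r$ be unit disks centered at $(-1/2,0)$ and $(1/2,0)$, and let $\gamma_1,\gamma_2$ be disks of huge radius $M+0.1$ centered at $(0,-M)$ and $(0,M)$, so that near the origin $\gamma_1\approx\{y\le 0.1\}$ and $\gamma_2\approx\{y\ge -0.1\}$. Then $R=\gamma_1\cap\gamma_2$ is a very long, thin convex lens along the $x$-axis, $\mathcal{I}^{(4)}=R\cap\gamma_t\cap\gamma_r$ has positive area, $m=4$, and the two crossing points $Z_{1,2}=(0,\pm\sqrt{3}/2)$ of $\partial\gamma_t$ and $\partial\gamma_r$ lie outside $R$, hence outside $\mathcal{I}^{(4)}$; moreover the four boundary arcs alternate as $\gamma_1,\gamma_t,\gamma_2,\gamma_r$, so $(\gamma_t,\gamma_r)$ is exactly a ``nowhere adjacent'' arc-bearing pair in your sense. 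Yet $R\not\subseteq\gamma_t\cup\gamma_r$: for $M\ge 20$ the point $(2,0)$ lies in $R$ and in neither unit disk. The step that breaks is your unproved claim that $\partial\gamma_t\cap R$ is a single arc: here it consists of two arcs (near $x=0.5$ and near $x=-1.5$), so $\beta_t$ and $\beta_r$ cut $R$ into five regions, two of which lie outside both circles. The failure is structural, not a technicality to be patched: a convex $R$ that meets the lens and avoids both of its tips can still escape ``sideways'' to points outside both disks, so no condition phrased only at the two crossing points can ever certify $R\subseteq\gamma_t\cup\gamma_r$. In the example the correct pair is $(\gamma_1,\gamma_2)$, since $\gamma_t\cap\gamma_r\subseteq\gamma_1\cup\gamma_2$, and your criterion gives no way to select it over the wrong pair.

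This is also where your route diverges fatally from the paper's. The paper never tries to control $R$ directly: it proves that some single circle $\gamma_h$ with $h\neq t,r$ satisfies $\gamma_h\subseteq\gamma_t\cup\gamma_r$, or, in the complementary configuration (which is precisely the one above), that the lens $\gamma_t\cap\gamma_r$ is contained in $\gamma_h\cup\gamma_x$, in which case the theorem's pair is taken to be $(h,x)$ rather than $(t,r)$; the desired inclusion then follows from the trivial containments $\bigcap_{i\neq t,r}\gamma_i\subseteq\gamma_h$ and $\bigcap_{i\neq h,x}\gamma_i\subseteq\gamma_t\cap\gamma_r$. Containment of a whole disk, or of a lens, in a union of two disks can be certified by a local three-region argument because those sets are bounded by the circles in play and cannot leak out sideways, unlike $R$. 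That adaptive switch between the two ``diagonal'' pairs is exactly what your fixed choice of a nowhere-adjacent pair lacks; to repair your proof you would need a case analysis deciding which diagonal pair to use, which essentially reconstructs the paper's argument.
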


\begin{proof}
See Appendix \ref{app:proof2}.
\end{proof}

Theorem $\ref{teo:intinun}$ requires $\Delta$ to have a number of arcs greater than $3$. If this does not hold, $\mathcal{I}^{(n)}$ collapses into the intersection of two or three circles, which can be obtained geometrically. If four or more circles are involved, Theorem $\ref{teo:intinun}$ ensures that the area of $\mathcal{I}^{(n)}$ can be found by algebraic manipulation of the intersections among $n{-}1$ and $n{-}2$ circles.

In fact, if Theorem $\ref{teo:intinun}$ holds, the intersection area $\mathcal{I}^{(n-2)}({t,r})$ among all circles but $\gamma_t$ and $\gamma_r$ is fully included in the union between $\gamma_t$ and $\gamma_r$. The area of this intersection can be calculated by considering the partition of $\gamma_t{\cup}\gamma_r$ into three regions, one belonging only to $\gamma_t$, one to both $\gamma_t$ and $\gamma_r$, and the third only to $\gamma_r$. All these three regions exist, the second one by hypothesis, the first and the third one because otherwise the required intersection $\mathcal{I}^{(n)}$ would be fully included in one circle, and the problem could be reduced to an analogous one with a lower number of circles. Call these areas $\mathcal{B}_1$, $\mathcal{B}_2$ and $\mathcal{B}_3$. $\mathcal{I}^{(n-2)}(t,r)$ intersects $\mathcal{B}_2$ by hypothesis, and it may also intersect $\mathcal{B}_1$, $\mathcal{B}_3$, or even both. Assume that it intersects all three areas, and call the intersections respectively $\
mathcal{A}_1$, $\mathcal{A}_2$ and $\mathcal{A}_3$. It is clear that $\mathcal{A}_2 = \mathcal{I}^{(n)}$. These three areas are all unknown. However, it follows from their definition that $\mathcal{A}_1\cup\mathcal{A}_2 = \mathcal{I}^{(n-1)}(t)$, $\mathcal{A}_2\cup\mathcal{A}_3 = \mathcal{I}^{(n-1)}(r)$, and $\mathcal{A}_1\cup\mathcal{A}_2\cup\mathcal{A}_3 = \mathcal{I}^{(n-2)}(r,t)$. Then
\begin{equation}
 \phi(\mathcal{A}_2) = ( \phi(\mathcal{A}_1) {+}  \phi(\mathcal{A}_2)) + (\phi(\mathcal{A}_2) + \phi(\mathcal{A}_3)) - ( \phi(\mathcal{A}_1){+}   \phi(\mathcal{A}_2) + \phi(\mathcal{A}_3)
\end{equation}
and, therefore:
\begin{equation}
 \label{areasucc}
 \phi\left(\mathcal{I}^{(n)}\right) = \phi\left(\mathcal{I}^{(n-1)}(t)\right)\! + \phi\left(\mathcal{I}^{(n-1)}(r)\right) - \phi\left(\mathcal{I}^{(n-2)}(t,r)\right)
\end{equation}
It can be similarly shown that even if $\mathcal{I}^{(n-2)}(t,r)$ does not intersect $\mathcal{B}_1$, $\mathcal{B}_3$ or both, the result still holds.
\begin{figure}
\centering
\includegraphics[width=\figw]{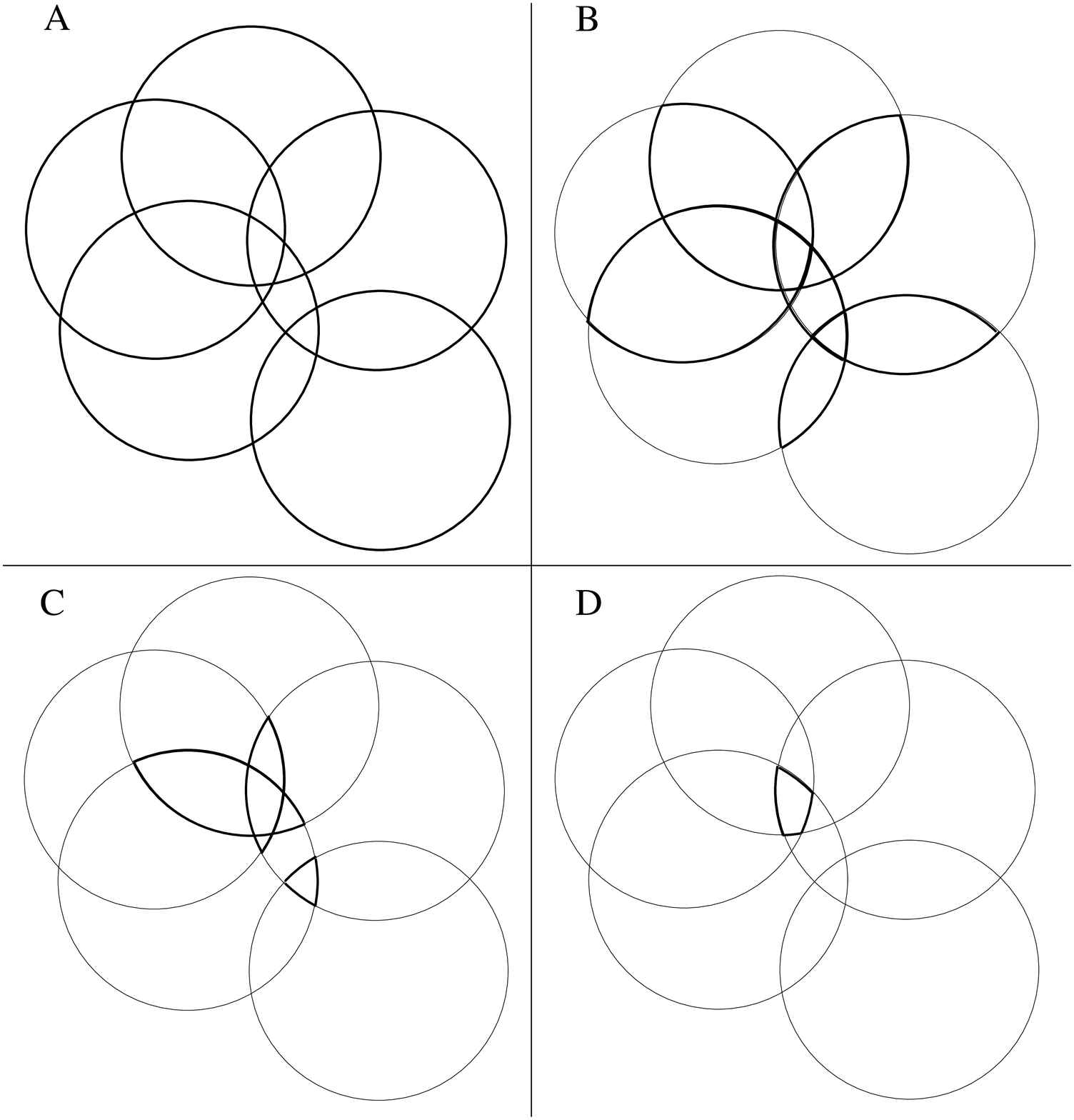}
\caption{Example of circles configuration and areas considered in the first four steps of the algorithm.}
\label{fig:cercalgo}
  \vup\vup\vup
\end{figure}

\section{The Algorithm}
\label{descralg}
In this Section, we describe in detail the proposed algorithm. We remark that the algorithm computes both the non exclusive intersection areas (set $\boldsymbol{\mathcal{I}}$) and the exclusive ones (set $\boldsymbol{\mathcal{E}}$). Since the cardinality of $\boldsymbol{\mathcal{I}}$ grows exponentially with the number of circles, it follows that also the complexity of the algorithm is exponential in the worst case (the cardinality of $\boldsymbol{\mathcal{E}}$ instead increases with the square number of the deployed circles).
Indeed, it is shown in Section \ref{algo_complex} that this happens when a common intersection exists among all the circles, with no nested circles. In this case, the overall complexity grows as $N_c^22^{N_c}$. The algorithm, exploiting the previously derived results, iteratively checks the existence and computes the intersection areas thus achieving a greatly reduced complexity in most cases.

As stated before, the algorithm is based on a trellis structure, which is built iteration by iteration with a simple procedure. After having retrieved all the elements of $\boldsymbol{\mathcal{I}}$, the exclusive intersection areas are also derived and collected in a set of vectors, which is the output of the algorithm.

In this Section, we first describe the trellis structure, showing why it is useful to represent and order the intersection areas, and how it is built through simple transition matrices; secondly, we list the auxiliary variables which are computed at each step of the algorithm, and explain how they are used; finally, we show how the auxiliary variables are updated at each step, and how, at the end of the last step, the vectors containing the exclusive intersection areas can be retrieved from them. 

\subsection{Trellis Structure}
The algorithm is based on a trellis structure. In each iteration, the algorithm takes as input the structure and the variables built at the previous steps,
in order to update the trellis.

Let us clarify how the algorithm works via a graphical example. Consider the circles
in Fig.~\ref{fig:cercalgo}, where the first four steps of the algorithm are depicted in the subfigures A, B, C and D,
respectively. In the first step, the algorithm simply calculates the areas of all the circles. In the second step, the algorithm checks the existence
and computes the intersection areas of any pair of circles via simple geometric calculations. In the third step, the intersection areas
of any existing intersection between any triplet of circles are computed. This is the last step involving geometric computations, as 
in the subsequent steps the existence and the measure of the intersection regions of any set of $4{\leq}n{\leq}N_c$ circles are carried out
via simple algebraic manipulation of the areas computed at the previous steps, exploiting the two theorems presented in Section~\ref{geomres}.

In order to effectively keep track of the existence and value of the regions computed in each step we build a proper graph, in which the various intersection areas correspond to the vertices of the graph. In this way, the relationships can be represented as the edges of the graph. As a byproduct, the graph is also useful to list and order all the areas to be computed.

Consider a set $\mathcal{C}$ of $N_c$ circles, labeled as $\gamma_i$, with $1{\leq} i{\leq} N_c$. Our trellis structure is given by a set of vertices $\mathcal{V}$ and a set of edges $\mathcal{R}$. Each vertex corresponds to an intersection among some of the circles belonging to the set $\mathcal{C}$ considered. Hence, there is a one-to-one relationship among the vertices and all possible subsets of $\mathcal{C}$, except the empty set. We define as $\mathcal{S}(v){\subseteq}\mathcal{C}$ the set of circles whose intersection corresponds to vertex $v$ in the graph.

The vertices are divided into the subsets $\mathcal{V}(n)$, with $n {=} 1{,} 2{,}\ldots, N_c$. The subset $\mathcal{V}(n)$ is the set of all the intersections among $n$ circles out of the $N_c$ considered (that is, the set of all the subsets of $\mathcal{C}$ with exactly $n$ elements).

The obtained subsets can be ordered from $\mathcal{V}(1)$ to $\mathcal{V}(N_c)$. Each edge can connect only two vertices belonging to two consecutive subsets $\mathcal{V}(n)$ and $\mathcal{V}(n{+}1)$. An edge connecting $j{\in}\mathcal{V}(n)$ and $i{\in}\mathcal{V}(n{+}1)$ exists if and only if $\mathcal{S}(j){\subset}\mathcal{S}(i)$. In other words, $i$ is reachable from $j$ if it corresponds to the intersection of all the circles associated with $j$ plus another one.

Finally, also the elements of each $\mathcal{V}(n)$ can be ordered. Vertex $i$ of the graph can be uniquely identified by a binary sequence of $N_c$ elements $\textbf{b}_i {=} [b^i_1,b^i_2,{\ldots},b^i_{N_c}]$ such that $b^i_t {=} 1$ if $\gamma_t{\in}{\mathcal{S}(i)}$. For instance, if $N_c{=}5$, the vertex corresponding to the intersection of $\gamma_2$, $\gamma_3$ and $\gamma_5$ can be labeled as $01101$. An equivalent labeling is obtained using the decimal representations of the binary sequences. In the example above, the same vertex is hence labeled as $13$. The vertices of $\mathcal{V}(n)$ are then ordered with decreasing labels.
An example of the full graph for $N_c{=}5$ is also reported in Fig.~\ref{fig:tre}.

The idea behind the trellis is straightforward: a vertex belonging to $\mathcal{V}(3)$, for instance, corresponds to the intersection of three circles. It follows that this area is related to the three intersection areas among any pair of the three considered circles (represented by three vertices belonging to $\mathcal{V}_2$), since it can be derived from each one of them by adding the missing circle. The exact way through which this can be done is explained below, and relies on Theorem \ref{teo:intinun}.

With the ordering described above, the graph can be fully described by $N_c{-}1$ transition matrices. We define the transition matrix $\textbf{M}_{n,n+1}^{(N_c)}$, for $1{\leq} n{<}N_c$, as a binary matrix containing the information about which edges exist between the vertices in $\mathcal{V}(n)$ and the ones in $\mathcal{V}(n{+}1)$. $\textbf{M}_{n,n+1}^{(N_c)}(i, j) = 1$ if there exists an edge connecting the $i$--th vertex of $\mathcal{V}({n{+}1})$ and the $j$--th vertex of $\mathcal{V}(n)$, and $0$ otherwise.

\begin{figure}
\centering
\includegraphics[width=\figw]{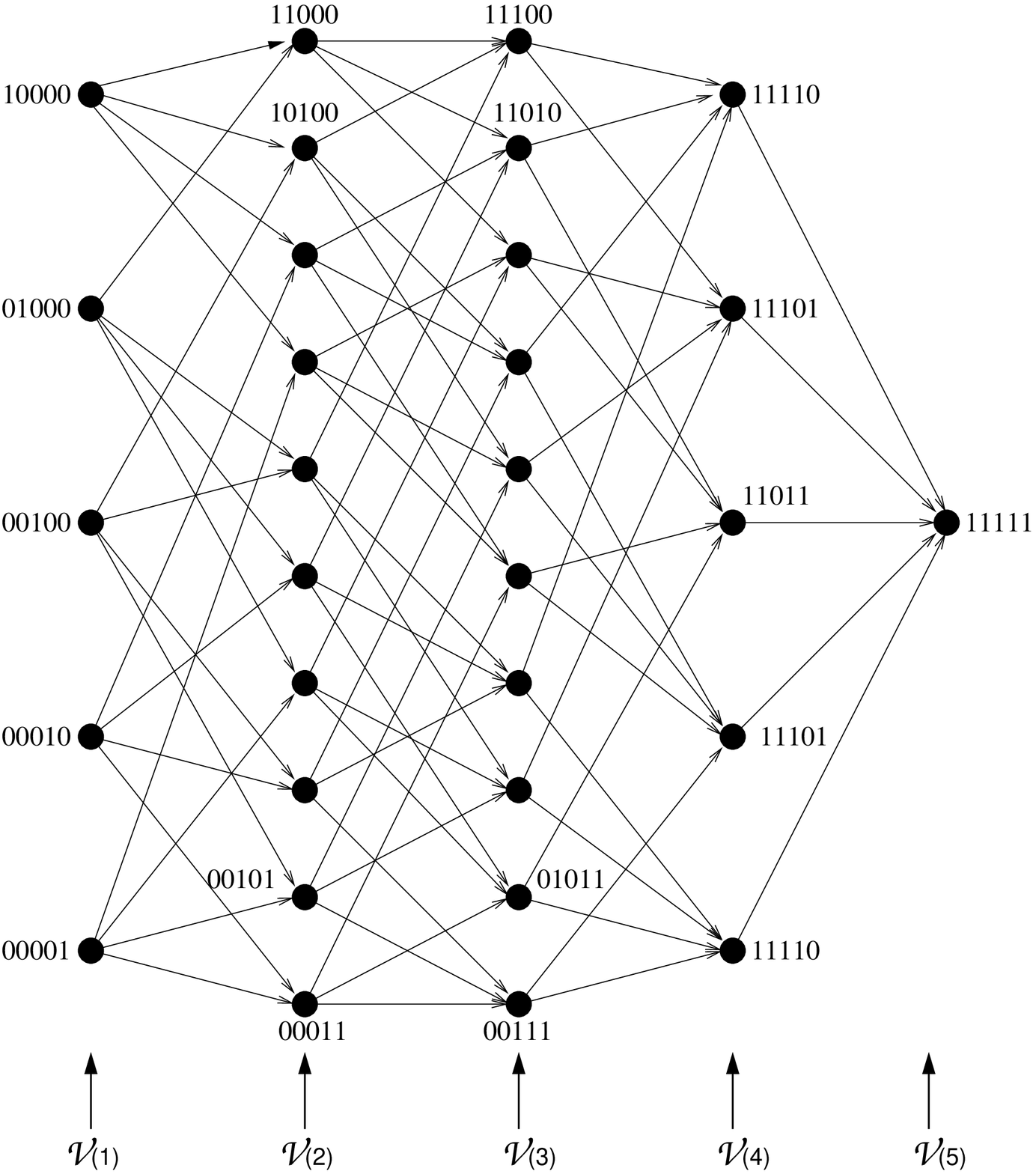}
\caption{The trellis structure for $n=5$. Some of the binary labels are reported. The $i$--th column of vertices corresponds to the subset $\mathcal{V}_i$.}
\label{fig:tre}
  \vup\vup\vup
\end{figure}
The computation of the transition matrices can be performed recursively, if we consider an additional subset $\mathcal{V}(0)$, which contains a single element corresponding to the empty set $\emptyset$, that is, to the region not covered by any circle. This ``virtual'' vertex is not included in the trellis in Fig.~\ref{fig:tre}, though it can be inserted on the left side and connected to all the vertices belonging to $\mathcal{V}(1)$.

With this modification, the following properties about the transition matrices hold:
\begin{itemize}
 \item $\textbf{M}_{n,n+1}^{(N_c)}$, for $1\leq n < N_c$, is a $p\times q$ matrix, where
\be
 p = \binom{N_c}{n+1}, \; q = \binom{N_c}{n}
\ee
 \item $\textbf{M}_{0,1}^{(N_c)}$ is a $N_c\times 1$ column vector whose elements are all equal to 1.
 \item If $n{\geq} {\lceil} N_c{/}2{\rceil}$, then $\textbf{M}_{n,n+1}^{(N_c)} = \left(\textbf{M}_{N_c-n-1,N_c-n}^{(N_c)}\right)^c$, where $^c$ indicates the transposition along the 
 secondary diagonal, i.e., the element of the matrix of the $i$--th row and $j$--th column is moved to the $q{-}j{+}1$--th row and $p{-}i{+}1$--th column.
 \item Each transition matrix can be derived recursively as:
\be
 \textbf{M}_{n,n+1}^{(N_c)} = \left[
 \begin{array}{cc}
 \textbf{M}_{n-1,n}^{(N_c-1)} & \textbf{I} \\
 \textbf{0} & \textbf{M}_{n,n+1}^{(N_c-1)}\\
\end{array}
\right]
\label{recur}
\ee
where $\textbf{I}$ is the identity matrix. We point out that, in any case, it must be $n<N_c$. However, for $n{\geq} {\lceil} N_c{/}2{\rceil}$, due to the third property of the transition matrix, we can compute instead $\textbf{M}_{N_c-n-1,N_c-n}^{(N_c)}$. The recursion appears in the first and in the fourth term of the matrix on the right side in Eq.~(\ref{recur}). In the first term both $N_c$ and $n$ are reduced at each iteration, until we get to a matrix with $n=0$, which can be derived using the second property of the transition matrices. As regards the fourth term, at each recursion only $N_c$ is reduced. In this case, the recursion ends when $N_c-1 = n+1$. In fact, with the exception of the trivial case $n=0$, the matrix $\textbf{M}_{n,n+1}^{(N_c-1)}$ can now be derived from $\textbf{M}_{0,1}^{(N_c-1)}$, due to the third property, which is in turn computed according to the second property.
\end{itemize}

Different transition matrices can be multiplied together, thus giving information about which vertices of a set $\mathcal{V}(n)$ are connected to vertices belonging to $\mathcal{V}(n{+}k)$, with $k{>}1$. The elements of the transition matrices are binary variables, and we define the transition matrix between $\mathcal{V}(n)$ and $\mathcal{V}(n{+}k)$ as
\be
 \textbf{M}_{n,n{+}k}^{(N_c)} = \frac{1}{k!}\prod_{i=0}^{k-1}\textbf{M}_{n+k-i-1,n+k-i}^{(N_c)}
\ee

\subsection{Auxiliary Variables}
\label{sec:auxvar}
As stated before, the algorithm at each step $n$ finds the intersection areas represented by the vertices belonging to $\mathcal{V}(n)$. In order to obtain these areas, it exploits the areas belonging to $\mathcal{V}(n{-}1)$, found in the previous step, and applies Theorem \ref{teo:intinun} based on the relationships expressed by the transition matrix $\textbf{M}_{n-1,n}^{(N_c)}$.\footnote{In the following we drop the superscript of the transition matrices for notation clarity.}  During the subsequent steps, however, some information must be collected and stored. Before describing the steps in detail, we list the auxiliary variables which are computed at each step. At the end of the algorithm, they are used to retrieve the required intersection areas.

\begin{itemize}
 \item We define the $n$--th label vector $\textbf{L}_n$ as the vector containing the decimal labels of all the elements of the set $\mathcal{V}(n)$. Therefore, $\textbf{L}_n$ has length $\displaystyle\binom{N_c}{n}$, and its elements are sorted in decreasing order. These vectors are computed using the transition matrices, since $\textbf{L}_1 = [2^{N_c-1}, 2^{N_c-2}, \hdots, 2, 1]^T$, and
\be
 \textbf{L}_n = \frac{1}{n-1}\,\textbf{M}_{n-1,n}\,\textbf{L}_{n{-}1}
 \label{lsucc}
 \ee
 \item The $n$--th existence vector $\textbf{E}_n$ is a binary vector containing information on the intersection areas of $n$ circles; therefore, its size is equal to the cardinality of $\mathcal{V}(n)$. The $i$--th element of $\textbf{E}_n$ is equal to 1 if the intersection among the circles associated with the $i$--th vertex in $\mathcal{V}(n)$ is not empty. Assuming that all the $N_c$ circles belonging to $\mathcal{C}$ have positive radius, then $\textbf{E}_1$ is an $N_c{\times} 1$ column vector whose elements are all equal to 1. The other vectors can be recursively computed with the following rule:
\be
 \textbf{E}_{n{+}1} = \max\left(\textbf{M}_{n,n+1}\textbf{E}_n {-} n\,\textbf{1},\, \textbf{0}\right)
 \label{condiex}
\ee
where $\textbf{1}$ and $\textbf{0}$ are here column vectors of length $\displaystyle\binom{N_c}{n{+}1}$, with all elements equal to 1 and 0 respectively, and the maximum is taken element-wise. Equation (\ref{condiex}) can be explained as follows. The intersection area $\mathcal{A}_w$ of $n{+}1$ circles, represented by vertex $w{\in}\mathcal{V}(n{+}1)$, can exist only if all the intersection areas among any $n$ of those circles also exist. These areas are in turn represented by $n{+}1$ vertices belonging to $\mathcal{V}(n)$, which are all connected to $w$ in the graph. Assume that $w$ is the $i$--th vertex of $\mathcal{V}(n{+}1)$: the existence of $\mathcal{A}_w$ is indicated by $\textbf{E}_{n{+}1}(i)$. Note now that the multiplication of the $i$--th row of $\textbf{M}_{n,n+1}$ and $\textbf{E}_n$ is equal to the number of existing intersections among $n$ circles out of the $n{+}1$ associated with $w$. Therefore, in order for $\mathcal{A}_w$ to exist, this product must be equal to $n{+}1$. This holds for each 
element of $\textbf{E}_{n{+}1}$, which is then reduced to a binary vector by subtracting $n$ and nulling the negative elements. When we are considering sets of up to two circles we need to do an additional check. In particular,  This means that even when $\textbf{E}_{n{+}1}(i) {=} 1$, the corresponding area may not exist, and the existence vector must therefore be modified accordingly. This check is not needed for sets of three or more circles due to Theorem \ref{teo:existn}.
\item We define the $n$--th area vector $\textbf{A}_n$ as the vector containing the values of all the intersection areas among $n$ circles. As $\textbf{L}_n$ and $\textbf{E}_n$, it is a column vector of $\displaystyle\binom{N_c}{n}$ elements. The values corresponding to non existing areas are equal to 0. We remark that the outcome of these calculations is the set $\boldsymbol{\mathcal{I}}^{(n)}$, $n{=}1,\ldots,N_c$, while our goal is to compute also the exclusive intersection regions, i.e., the set $\boldsymbol{\mathcal{E}}^{(n)}$. We will show how to compute those areas later.
\end{itemize}

Besides those vectors, the algorithm also keeps track of the vectors $\textbf{r}$, $\textbf{x}_c$ and $\textbf{y}_c$, containing the radii and the coordinates of the centers of the $n$ circles. Finally, the symmetric matrix $\textbf{D}$ contains the distances among the centers: $\textbf{D}(i,j)$ is the distance between the centers of the circles $\gamma_i$ and $\gamma_j$.

\subsection{Computation of the Intersection Areas}
\label{comparea}
Based on the trellis structure and the auxiliary variables introduced above, we now illustrate how the algorithm works.

We recall that the aim of the algorithm is to calculate the areas of all the exclusive intersections of the set $\boldsymbol{\mathcal{E}}$. The algorithm uses the auxiliary variables $\textbf{L}_n$ and $\textbf{E}_n$ to efficiently compute the elements of the vectors $\textbf{A}_n$, for $n\in\{1,2,\ldots,N_c\}$.

The algorithm is divided in three phases:
\begin{itemize}
 \item Initialization phase, where all the auxiliary variables are initialized;
 \item Trellis Computation phase, whose aim is to compute the values of all the vectors $\textbf{A}_n$, for $n\in\{1,2,\ldots,N_c\}$;
 \item Data Processing phase, where the exclusive intersections of the set $\boldsymbol{\mathcal{E}}$ are retrieved from the vectors $\textbf{A}_n$.
\end{itemize}

We report in the following how the algorithm works in each phase.

In the initialization phase, the auxiliary variables $\textbf{L}_1$, $\textbf{E}_1$ and $\textbf{A}_1$ are calculated as stated in Section \ref{sec:auxvar}. Also, the transition matrices are recursively derived. We assume that all the $N_c$ circles involved have finite and positive radius, and thus the measure of
the region covered by each circle is strictly positive.

The Trellis Computation phase then is performed in $N_c-1$ subsequent steps. At each step $n$, the aim of the algorithm is to calculate and store the corresponding area vector $\textbf{A}_n$. To this purpose, it performs the following operations:
\begin{itemize}
\item generation of $\textbf{L}_n$ from $\textbf{L}_{n{-}1}$ using (\ref{lsucc}). This variable is useful, since the binary labels, which can be easily obtained from the decimal ones, provide an effective way to recognize which vertices of $\mathcal{V}(n{-}1)$ are connected to each vertex of $\mathcal{V}(n)$;
\item generation of $\textbf{E}_n$ from $\textbf{E}_{n{-}1}$ using (\ref{condiex}). The number and positions of the nonzero elements of $\textbf{E}_n$ provide information on which intersection areas must be calculated. For $n{\leq} 3$, however, a geometric check is necessary for each of these elements, since a nonzero value does not necessarily mean that the intersection represented by the corresponding vertex exists. The check is done based on the centers and the radii of the involved circles (identified through the label vector $\textbf{L}_n$), and it is then possible to obtain $\textbf{E}_n$. No checks are needed for $n{>}3$, thanks to Theorem \ref{teo:existn};
\item calculation of $\textbf{A}_n$ from the already known vectors $\textbf{A}_i$, with $i<n$. If $n\leq 3$, the existing areas, according to the information given by $\textbf{E}_n$, can be calculated geometrically, applying known formulas with the centers and the radii of the intersecting circles, again retrieved through $\textbf{L}_n$. If on the contrary $n{>}3$, then we proceed as follows.
For the $k$--th element of $\textbf{A}_n$, that is, $\textbf{A}_n(k)$, we can consider a reduced version of the trellis, containing only the circles whose intersection is represented by $\textbf{A}_n(k)$. In this reduced trellis, all the vectors $\textbf{A}_i$, for $i < n$, are known, each of them containing a suitable subset of the elements of the vectors of the original trellis. The problem is now equivalent to the more general one of finding $\textbf{A}_{N_c}$ in a trellis when all the other vectors $\textbf{A}_i$ are already known, and we refer to this case from now on.

To solve this problem, we define the vectors $\hat{\textbf{A}}_i$, for $i\in\{1,2,\ldots,N_c-1\}$, as:
\be
 \hat{\textbf{A}}_i {=}\!\begin{cases}
                       \textbf{A}_i {-}\!\!\!\! \displaystyle\sum_{j=i+1}^{N_c-1}\!\!(-1)^{j-i+1}\textbf{M}_{i,j}^T\textbf{A}_j\!\! &i{<}N_c{-}1\\
			\textbf{A}_i &   i{=}{N_c}{-}1
                      \end{cases}
\label{ricompo}
\ee
Now $\textbf{A}_{N_c}$, which is a scalar, can be found from the vectors $\hat{\textbf{A}}_i$. More specifically, we state that, for $N_c>4$, $\textbf{A}_{N_c}$ is equal to the maximum element of $-\hat{\textbf{A}}_{N_c-2}$. This holds also for $N_c = 4$, unless the minimum elements of $\textbf{A}_1$ and $\textbf{A}_3$ are equal, and greater than the maximum element of $-\hat{\textbf{A}}_{N_c-2}$. In this special case, an additional geometric check is required, since the value of $\textbf{A}_{N_c}$ may instead be equal to the minimum element of $\hat{\textbf{A}}_{N_c-1}$. A formal proof of these statements, as well as the required check for the special case $N_c = 4$, is reported in Section \ref{proofcorrect}, and is based on Theorem \ref{teo:intinun}.
\end{itemize}

After the last step of the Trellis Computation phase, all the vectors $\textbf{A}_i$, with $i\in\{1,2,\ldots,N_c\}$, are known. However, the elements of these vectors are not belonging to the set $\boldsymbol{\mathcal{E}}$, since they are the areas of the non exclusive intersections.
The Data Processing phase of the algorithm performs the computation of the exclusive intersection areas, contained in the vectors $\tilde{\textbf{A}}_i$, with $i\in\{1,2,\ldots,N_c\}$.

It can be shown (a sketch of the proof is reported in Section \ref{proofcorrect}) that the following equation holds:
\be
 \tilde{\textbf{A}}_i {=}\!\begin{cases}
                       \textbf{A}_i {-}\!\!\!\! \displaystyle\sum_{j=i+1}^{N_c}\!\!(-1)^{j-i+1}\textbf{M}_{i,j}^T\textbf{A}_j\!\! &i{<}N_c\\
			\textbf{A}_i &   i{=}{N_c}
                      \end{cases}
\label{ricompall}
\ee
where the equality between $\textbf{A}_{N_c}$ and $\tilde{\textbf{A}}_{N_c}$ is intuitive, since the non exclusive and the exclusive intersection among all the $N_c$ circles are necessarily equal. Equation (\ref{ricompall}) is similar to (\ref{ricompo}), except that now the sum is up to $N_c$, since here all the vectors $\textbf{A}_i$ are known. The elements of the vectors $\tilde{\textbf{A}}_i$ are the required areas, and this concludes the algorithm.

\subsection{Proof of algorithm correctness}
\label{proofcorrect}
We prove here that, if the vectors $\textbf{A}_i$ and $\hat{\textbf{A}}_i$ are known, for $1 \leq i < N_c$, then $\textbf{A}_{N_c}$, which is a scalar, can be found as the maximum value of $-\hat{\textbf{A}}_{N_c-2}$. This holds for $N_c > 4$, and very often also for $N_c = 4$ where, however, in some special cases the required value is instead equal to the minimum value of $\hat{\textbf{A}}_{N_c-1}$. Finally, for $N_c \leq 3$, all the areas can be found geometrically. 

We recall that the vectors $\textbf{A}_i$ contain the non exclusive intersection areas, whereas the vectors $\hat{\textbf{A}}_i$ can be obtained using (\ref{ricompo}). The proof is structured as follows:
\begin{itemize}
 \item we first determine an expression for the elements of the vectors $\hat{\textbf{A}}_i$. We focus on the first element of $\hat{\textbf{A}}_1$, since all the other ones can be retrieved in an analogous manner. We express them as sums of exclusive intersection areas.
 \item using the expression of $\hat{\textbf{A}}_i$ and applying Theorem \ref{teo:intinun}, we prove the statement for $N_c > 4$.
 \item we point out in which cases the statement is not true for $N_c = 4$, and determine how the correct value of $\hat{\textbf{A}}_{N_c}$ can be found in these special cases.
\end{itemize}

In the previous Section, it has been stated that the two geometric Theorems \ref{teo:existn} and \ref{teo:intinun} cannot be applied directly in the proposed algorithm. This is because when the algorithm is executed, it is not known a priori how the circles intersect with each other (which would require an exponentially complex conditions check). The only available data at the $i$-th step are the numerical values of the areas computed in the previous $i-1$ steps. More precisely, at step $N_c$, the vectors $\textbf{A}_i$ for $1 \leq i < N_c$ are available. However, the elements of these vectors are not the exclusive intersection areas, as explained before. Vectors $\hat{\textbf{A}}_i$, for $1\leq i < N_c$, can also be computed, according to (\ref{ricompo}), but also these vectors do not contain the values of the exclusive intersection areas, since $\textbf{A}_{N_c}$ is not known, and consequently the sums in (\ref{ricompo}) are up to $N_c-1$.

Therefore, it is not straightforward how the value of $\textbf{A}_{N_c}$ can be obtained starting from the vectors $\textbf{A}_i$ and $\hat{\textbf{A}}_i$, for $1\leq i < N_c$. We focus on $\textbf{A}_{N_c}$ since, in order to compute each element of $\textbf{A}_i$, for $3 < i < N_c$, it is sufficient to run the algorithm while considering only $i$ circles. In this reduced version of the algorithm, $\textbf{A}_i$ has only one element, and its role is exactly the same as $\textbf{A}_{N_c}$ in the non reduced version of the algorithm. Finally, the elements of $\textbf{A}_i$ for $i\leq 3$ can be derived via geometric computation.

\begin{figure}
    \centering
    \includegraphics[width=0.28\textwidth]{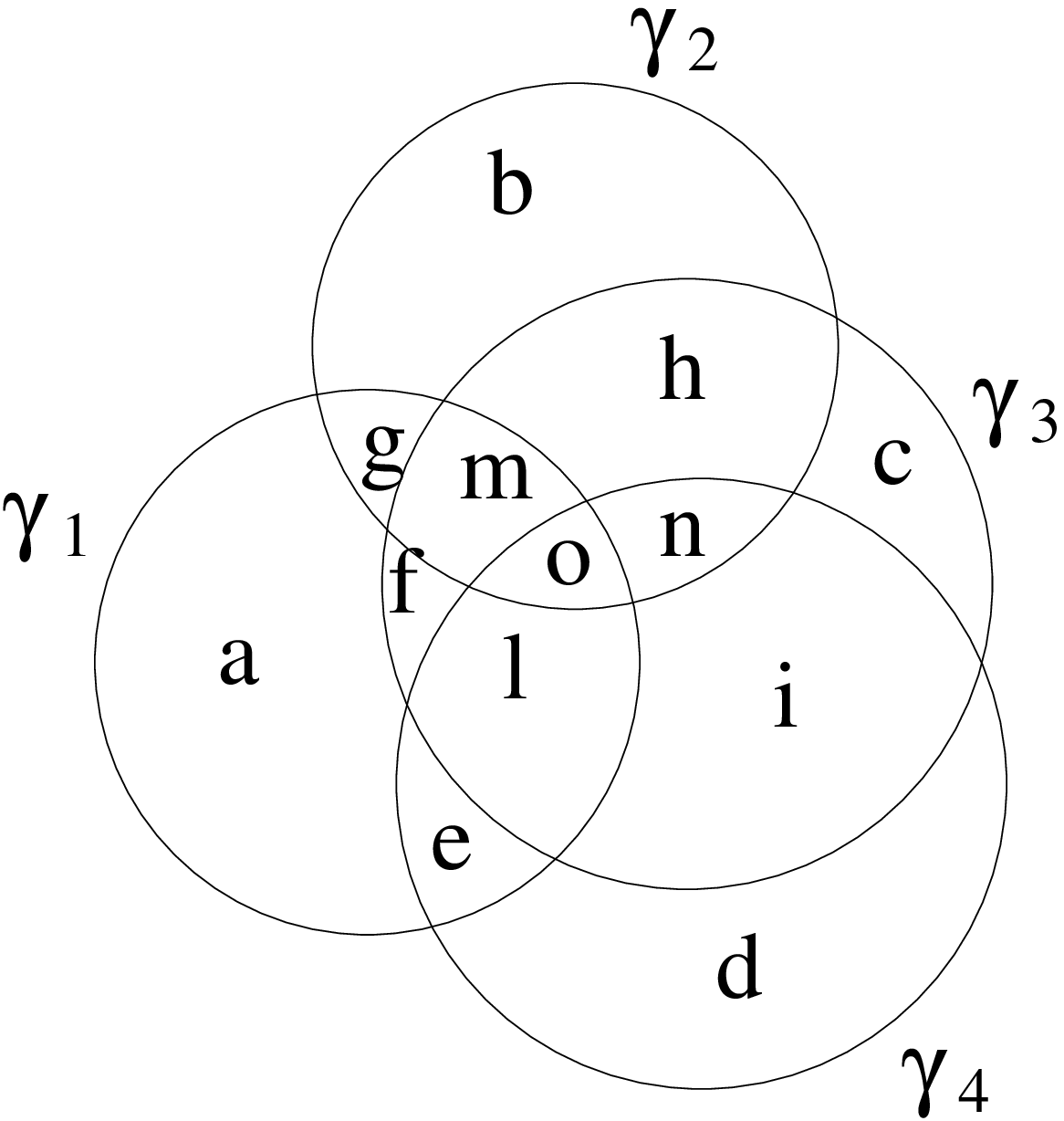}
     \caption{Example of disposition of 4 circles.}
  \label{fig:esecerchi}
  \vspace{-0.5cm}
\end{figure}

The unknown value $\textbf{A}_{N_c}$ is the intersection area of all the $N_c$ circles. In order to explain how this value can be retrieved, we first clarify which are the areas corresponding to the elements of the known vectors. Each element of $\textbf{A}_i$ is the non exclusive intersection area of $i$ circles. In other words, it is the value of the area whose points belong to all the $i$ considered circles, but which may be included also in other circles. In order to distinguish them, we call $\mu_{i_1,i_2,\ldots,i_k}$ the element of $\textbf{A}_k$ corresponding to the non exclusive intersection of circles $\gamma_{i_1}$, $\gamma_{i_2}$..., $\gamma_{i_k}$. In Figure \ref{fig:esecerchi}, we depict an example of deployment of four circles. In this scenario, for example, $\mu_{1,2}$ is given by $g+m+o$, whereas $\mu_{1,3,4}$ is equal to $l+o$. We simply call $\mu$ the intersection of all the $N_c$ circles, which is the only element of $\textbf{A}_{N_c}$, and, thus, the value we want to find. Analogously, we 
call $\mu^*_{i_1,i_2,\ldots,i_k}$ the exclusive intersection of circles $\gamma_{i_1}$, $\gamma_{i_2}$..., $\gamma_{i_k}$. With reference to Figure \ref{fig:esecerchi}, we have $\mu^*_{1,2} = g$, and $\mu^*_{1,3,4} = l$. Note that these values are not necessarily contained in any of the vectors $\textbf{A}_i$ or $\hat{\textbf{A}}_i$.

We now want to express the elements of the vectors $\hat{\textbf{A}}_i$, for $1 \leq i < N_c$ as a function of the exclusive intersection areas, and we start with $\hat{\textbf{A}}_1$, which has exactly $N_c$ elements. The rationale behind equation (\ref{ricompo}) is the following. Since $\textbf{A}_1$ contains the whole areas of all the circles, in order to find the areas covered by exactly one circle we need to subtract the fractions of these areas which are shared with other circles. We first subtract the areas shared by two circles, which are contained in $\textbf{A}_2$. In doing so, since $\textbf{A}_2$ does not contain the exclusive intersection areas, we are subtracting the areas shared by three circles more than once. Hence, we need to sum them again. They are contained in $\textbf{A}_3$. For the same reason, we then need to subtract again the areas shared by four circles, contained in $\textbf{A}_4$, and so forth. If we had the intersection area of all the $N_c$ circles, at the end of this sum we 
would get the exclusive intersection areas of one circle, that is, with a slight abuse of expression, the areas covered by exactly one circle.

For the sake of clarity, we recall again the example of Figure \ref{fig:esecerchi}. The first element of $\textbf{A}_1$ contains the area of $\gamma_1$, which we call $\mu_1$. According to equation (\ref{ricompo}), the first element of $\hat{\textbf{A}}_1$ is computed as:
\bea
 \hat{\textbf{A}}_1(1) & = & \mu_1 - \mu_{1,2} - \mu_{1,3} - \mu_{1,4} + \mu_{1,2,3} + \mu_{1,2,4} + \mu_{1,3,4} \nonumber\\
	& = & (a + e + f + g + l + m + o) - (g + m + o) + \nonumber\\
	& & - (f + l + m + o) - (e + l + o) + \nonumber\\
	& & + (m + o) + (o) + (l + o) \nonumber\\
	& = & a + o
\label{esesomma}
\eea
As can be observed, the result is not the area covered only by circle $\gamma_1$, which would be instead if $\textbf{A}_4 = o$ was subtracted by the result. Since $\textbf{A}_4$ is the area we are looking for, it is necessary to find a way to determine its value.

We first derive an expression for the elements of the vectors $\hat{\textbf{A}}_k$. It is clear from (\ref{ricompo}) that the number of elements of $\hat{\textbf{A}}_k$ is equal to the number of elements of $\textbf{A}_k$ for each $k$. Since the derivation is analogous for every value of $k$, we focus on $\hat{\textbf{A}}_1$.

The first element of this vector is obtained as:
\bea
 \hat{\textbf{A}}_1(1) & = & \mu_1 + \sum_{n=2}^{N_c-1}\left(-1\right)^{n-1}\sum_{1<i_1<1_2<\ldots<i_{n-1}\leq N_c}\mu_{1,i_1,i_2,\ldots,i_{n-1}} \nonumber \\
 & = & \sum_{n=1}^{N_c-1}\left(-1\right)^{n-1}\lambda_n
\label{arcompo}
\eea
where we define
\bea
 \lambda_1 & = & \mu_1; \nonumber \\
 \lambda_n & = & \sum_{1<i_1<i_2<\ldots<i_{n-1}\leq N_c}\mu_{1,i_1,i_2,\ldots,i_{n-1}}
\eea

Since we are computing the first element of $\textbf{A}_1$, we are focusing on circle $\gamma_1$. The first term of the sum is the area of the circle. The second one, $\lambda_2$, is the sum of all the non exclusive intersection areas between $\gamma_1$ and another circle, $\lambda_3$ is the sum of all the non exclusive intersections between $\gamma_1$ and two other circles, and so on.
These areas are not disjoint, since they are the non exclusive intersections. We want to rewrite them in terms of disjoint areas, in order to simplify the expression in (\ref{arcompo}). Define the following sums of disjoint areas:
\bea
 \lambda_1^* & = & \mu_1^* \nonumber\\
 \lambda_n^* & = & \sum_{1<i_1<i_2<\ldots<i_{n-1}\leq N_c}\mu^*_{1,i_1,i_2,\ldots,i_{n-1}}
\eea

According to the above definitions, $\lambda_1^*$ is the area covered only by $\gamma_1$, $\lambda_2^*$ is the sum of all the areas covered only by $\gamma_1$ and another circle, that is, the sum of all the exclusive intersection areas between $\gamma_1$ and another circle. With reference to Figure \ref{fig:esecerchi}, we can write for instance:
\bea
 \lambda_1 = a + e + f + g + l + m + o & , & \lambda_1^* = a, \nonumber\\
 \lambda_2 = e + f + g + 2l + 2m + 3o & , & \lambda_2^* = e + f + g. \nonumber
\eea

We can now express the non exclusive intersection areas, which appear in (\ref{arcompo}), as functions of the exclusive ones, by writing relationships between the $\lambda_i$'s and the $\lambda^*_i$s. In order to do so, we note the following facts:
\begin{itemize}
 \item $\lambda_1$ is the area of $\gamma_1$, and is hence given by the sum of all the exclusive intersection areas involving $\gamma_1$, that is:
\bea
 \lambda_1 & = & \sum_{i=1}^{N_c}\lambda^*_i
\eea
 \item $\lambda_2$ is the sum of all the non exclusive intersection areas between $\gamma_1$ and another circle. In doing this sum, the areas covered by more than two circles (one of which is, by definition, $\gamma_1$), are counted more than once. More precisely, the areas covered by $n$ circles are counted $n-1$ times, and hence:
\bea
 \lambda_2 & = & \sum_{i=2}^{N_c}\left(i-1\right)\lambda^*_i
\eea
 \item in general, when computing $\lambda_k$, the areas covered by $n \geq k$ circles are counted as many times as the number of possible extractions of $k-1$ circles out of $n-1$. On the contrary, the areas covered by $n < k$ circles are never counted. Therefore:
\bea
 \lambda_k & = & \sum_{n=k}^{N_c}\binom{n-1}{k-1}\lambda^*_n
\eea
\end{itemize}

We can order the expressions of the $\lambda_i$ in an $N_c-1 \times N_c$ matrix $\textbf{L}$, whose $i$--th row contains the terms of the sum defining $\lambda_i$. The sign of the even rows is then changed, to get:
\bea
 \textbf{L} & = & \left[
\begin{array}{ccccc}
 \binom{0}{0}\lambda^*_1 & \binom{1}{0}\lambda^*_2 & \binom{2}{0}\lambda^*_3 & \ldots & \binom{N_c-1}{0}\lambda^*_{N_c} \\
  0 & -\binom{1}{1}\lambda^*_2 & -\binom{2}{1}\lambda^*_3 & \ldots & -\binom{N_c-1}{1}\lambda^*_{N_c} \\
  0 & 0 & \binom{2}{2}\lambda^*_3 & \ldots & \binom{N_c-1}{2}\lambda^*_{N_c} \\
  \vdots & \vdots & \vdots & \ddots & \vdots \\
  0 & 0 & 0 & \ldots & (-1)^{N_c}\binom{N_c-1}{N_c-2}\lambda^*_{N_c} \\
 \end{array}
 \!\!\!\right]
\eea

With this representation, according to (\ref{arcompo}), the first element of $\hat{\textbf{A}}_1$ is simply given by the sum of all the elements of $\textbf{L}$. However, while in (\ref{arcompo}) the sum is computed row by row, we now observe that it is simpler to consider the columns. The sum of the elements of the first column is clearly equal to $\lambda^*_1$. For $2\leq n\leq N_c-1$, the sum $s_n$ of the elements of column $n$ is expressed as:
\bea
 s_n & = & \sum_{k=0}^{n-1}(-1)^k\binom{n-1}{k}\lambda^*_n = 0 
\eea
The sum of all the columns is hence 0, with the exception of the last one, due to the fact that the sum in (\ref{ricompo}) is up to $N_c-1$. The calculation of $s_{N_c}$ is straightforward:
\bea
 s_{N_c} & = & \sum_{k=0}^{N_c-2}(-1)^k\binom{N_c-1}{k}\lambda^*_{N_c} \nonumber \\
 & = & \sum_{k=0}^{N_c-1}(-1)^k\binom{N_c-1}{k}\lambda^*_{N_c} - (-1)^{N_c-1}\lambda^*_{N_c} \nonumber \\
 & = & (-1)^{N_c}\lambda^*_{N_c}
\eea

Putting all together, we have that
\bea
 \hat{\textbf{A}}_1(1) & = & \sum_{n=1}^{N_c}s_n \nonumber \\
 & = & s_1 + s_{N_c} \nonumber \\
 & = & \lambda^*_1 + (-1)^{N_c}\lambda^*_{N_c} \nonumber \\
 & = & \mu^*_1 + (-1)^{N_c}\mu^*_{1,2,3,\ldots,N_c} \nonumber \\
 & = & \mu^*_1 + (-1)^{N_c}\mu
 \label{eleacap}
\eea
where $\mu$, as defined above, is the intersection area of all the $N_c$ circles. It is clear that the same derivation can be done for all the elements of $\hat{\textbf{A}}_1$, such that $\hat{\textbf{A}}_1(k) = \mu^*_k + (-1)^{N_c}\mu$. It can also be shown that an analogous procedure can be followed to determine the elements of the other vectors $\hat{\textbf{A}}_n$, with $2\leq n < N_c$. In this case, for each element $\hat{\textbf{A}}_n(k)$ the sums $\lambda_i$ and $\lambda^*_i$ can be defined in the same manner, for $n \leq i \leq N_c$, and an $(N_c-n-1) \times (N_c-n)$ matrix can be constructed, finally resulting in
\bea
 \hat{\textbf{A}}_n(k) & = & \mu^*_{i_1,i_2,\ldots,i_n} + (-1)^{N_c-n+1}\mu
 \label{eleacap2}
\eea
In the expression above, the indices $i_1,i_2,\ldots,i_n$ depend on the element which is being calculated. Reporting again the example in Figure \ref{fig:esecerchi}, we have:
\bea
 \hat{\textbf{A}}_1 = \left[
\begin{array}{c}
 \mu^*_1 + \mu \\
 \mu^*_2 + \mu \\
 \mu^*_3 + \mu \\
 \mu^*_4 + \mu \\
\end{array}
\right],
\hat{\textbf{A}}_2 = \left[
\begin{array}{c}
 \mu^*_{1,2} - \mu \\
 \mu^*_{1,3} - \mu \\
 \mu^*_{1,4} - \mu \\
 \mu^*_{2,3} - \mu \\
 \mu^*_{2,4} - \mu \\
 \mu^*_{3,4} - \mu \\
\end{array}
\right],
\hat{\textbf{A}}_3 = \left[
\begin{array}{c}
 \mu^*_{1,2,3} + \mu \\
 \mu^*_{1,2,4} + \mu \\
 \mu^*_{1,3,4} + \mu \\
 \mu^*_{2,3,4} + \mu \\
\end{array}
\right]
\label{varia}
\eea

The vectors above can be computed by the algorithm at each step. We stress again that the area $\mu$, as well as all the exclusive intersection areas, are unknown, and cannot in general be retrieved from the vectors $\hat{\textbf{A}}_i$. This is true for $N_c > 3$, as stated before, since otherwise all the intersection areas can be found geometrically. We skip for now the special case $N_c = 4$. For $N_c \geq 5$, we can now exploit Theorem \ref{teo:intinun}. The theorem holds only if the number $m$ of circular arcs that delimit the intersection area of all the circles (in this case equal to $\mu$) is greater than or equal to 4.

If this is not true, then there is a circle $\gamma_k$ that fully contains the intersection area of all the other $N_c-1$ circles. This area also belongs to $\gamma_k$, which implies that there is at least one exclusive intersection area among $N_c-1$ circles which is empty. Looking at the example for $N_c = 4$, whose vectors are reported in (\ref{varia}), this means that one of the values $\mu^*_{i,j,k}$ in $\hat{\textbf{A}}_3$ is zero. As a consequence, in order to retrieve $\mu$ it is sufficient in general to take the minimum element of $\hat{\textbf{A}}_{N_c-1}$.

If on the contrary the hypothesis of Theorem \ref{teo:intinun} holds, then it means that there are two circles $\gamma_j$ and $\gamma_k$ that fully contain the intersection area of all the other $N_c-2$ circles. Following the same reasoning as above, it can be concluded that there exists at least one exclusive intersection area among $N_c-2$ circles which is empty, and the value of $\mu$ is the maximum of $-\hat{\textbf{A}}_{N_c-2}$. 

Unfortunately, this is not enough, since it is not known a priori whether the hypothesis of Theorem \ref{teo:intinun} holds or not. If it does not, however, we can use the following argument. If $m = p$, with $1\leq p\leq 3$, there are $p$ circles whose intersection is fully included in any other circle (for $p=1$, there is one circle which is contained in any other circle). We call $\mathcal{P}$ the set of these circles, whereas $\mathcal{Q}$ is the set of the remaining $N_c-p$ circles. Consider $N_c - p - 2$ other circles belonging to $\mathcal{Q}$. The intersection of these circles and the ones belonging to $\mathcal{P}$ is fully contained in the remaining two circles of $\mathcal{Q}$. As a consequence, at least one of the exclusive intersection areas among $N_c-2$ circles is empty, and again $\mu$ can be obtained as the maximum of $-\hat{\textbf{A}}_{N_c-2}$, as in the case where the hypothesis of Theorem \ref{teo:intinun} holds. The abovementioned consideration clearly requires that $N_c$ is at least 
equal to 5, otherwise, if $p=3$, it is not possible to take $N_c-p-2$ circles from $\mathcal{Q}$ \footnote{if $N_c = 5$ and $p = 3$, the intersection of the circles belonging to $\mathcal{P}$ is fully included in the two circles belonging to $\mathcal{Q}$, and the same reasoning still holds.}.

The only case to be studied separately is $N_c = 4$, which is analyzed in depth in Appendix \ref{app:specialcase}.

\section{Algorithm complexity}
\label{algo_complex}
The computation of the algorithm complexity is not straightforward, given its strong dependence on the specific circles deployment. We therefore proceed to determine an upper bound. We first observe that the complexity not only depends on the number of circles $N_c$ considered, but rather on how these circles are deployed. For instance, the time needed by the algorithm when the $N_c$ circles are not intersecting at all grows as $N_c^2$ (linearly for the calculation of the areas, quadratically to verify the absence of any intersection). However, the complexity grows when more circles are intersecting.

We now make the following observation. The algorithm gives, as a result, both the non exclusive and the exclusive intersection areas of $N_c$ circles. Both these results can be useful, depending on the considered application. The most demanding task regards the non exclusive intersection areas, whose number can rise up to $2^{N_c}$. This number is reached when an intersection among all the $N_c$ circles exists.
If this is not true, only a subset of the areas represented by the vertices of the graph reported in Fig. \ref{fig:tre} must be computed. Due to Theorem \ref{teo:existn}, this subset is immediately identified after the calculation of the non exclusive intersection areas among triplets of circles.
In the following we then consider the case where the intersection between all the $N_c$ circles exists.

As regards the exclusive intersection areas, instead, their number is no greater than the number of disjoint areas in which the plane is divided when the circles are deployed. It can be observed that this number can be at most equal to $N_c^2-N_c+2$, which happens when every circumference intersects all the remaining circumferences\footnote{This can be easily proved by adding a circle at a time.
The first circle creates two disjoint areas. When the $i$--th circumference is added, it can intersect at most all the other $i-1$ ones already deployed, each one in at most 2 different points. These $2(i-1)$ points divide the added circumference into $2(i-1)$ arcs. Since all the $i-1$ already deployed circumferences intersect each other by hypothesis, each of these arcs divides an existing intersection area into 2 areas, hence creating a total of $2(i-1)$ new areas. It follows that, when all the $N_c$ circles have been deployed, the total number of exclusive intersection areas is given by $2+\sum_{i=2}^{N_c}2(i-1) = 
N_c^2-N_c+2$.}.

The worst case therefore occurs when there exists an intersection among all the $N_c$ circles and no circles are fully included in other ones. We will focus on this case from now on.

We start with the computation of the non exclusive intersection areas. Having already computed the ones created by couples and triplets of circles, we still have to obtain those among $4,5,\ldots, k$ circles.
We observe from equation (\ref{ricompo}) that in order to find each element of the vector $\textbf{A}_i$, with $4\leq i\leq k$ we need to compute the maximum between a subset of the elements of $\hat{\textbf{A}}_{i-2}$, which is in turn obtained as $\textbf{A}_{i-2} - \textbf{M}_{i-2,i-1}^T\textbf{A}_{i-1}$.
All the matrices $\textbf{M}_{i,j}^T$ are $\binom{N_c}{i}\times\binom{N_c}{j}$ binary matrices, with exactly $\binom{N_c-i}{j-i}$ non-zero elements per row. Therefore, they can be more efficiently replaced by lists of indices, each containing $\binom{N_c}{i}\times\binom{N_c-i}{j-i}$ indices. These lists should be precomputed, possibly in a smart manner, exploiting the recursive formulation expressed in (\ref{recur}), and the symmetry between $\textbf{M}_{n,n+1}$ and $\textbf{M}_{N_c-n-1,N_c-n}$.

With these matrices available, we focus on the number of operations needed to find the vector $\textbf{A}_i$, with $4\leq i\leq k$. The vector has $\binom{N_c}{i}$ elements. The vector $\hat{\textbf{A}}_{i-2}$ has $\binom{N_c}{i-2}$ elements. The computation of each one, following the expression $\textbf{A}_{i-2} - \textbf{M}_{i-2,i-1}^T\textbf{A}_{i-1}$, requires $N_c-i$ subtractions, since this is the number of non--zero elements in each row of $\textbf{M}_{i-2,i-1}^T$. Summing over all the values of $i$, the total number of subtractions is
\begin{equation}
 \sum_{i=2}^{N_c-2}\binom{N_c}{i}(N_c-i) = \frac{1}{2}N_c\left(2^{N_c}-2N_c-2\right)
\end{equation}

Having derived the vector $\hat{\textbf{A}}_{i-2}$, each element of $\textbf{A}_i$ is found as the minimum among a subset of $\binom{i}{2}$ elements of $\hat{\textbf{A}}_{i-2}$. Assuming that finding the minimum between $n$ elements requires $n$ comparisons, the overall number of comparisons needed is:
\begin{equation}
 \sum_{i=4}^{N_c}\binom{N_c}{i}\binom{i}{2} = \frac{1}{8}\left(2^{N_c}-4N_c+4\right)(N_c-1)N_c
\end{equation}

We have shown that, in order to compute the $2^{N_c}$ non exclusive intersection areas of a family of $N_c$ circles, the number of subtractions needed grows as $N_c2^{N_c}$, while the number of comparisons needed grows as $N_c^22^{N_c}$.
However, in a smart implementation, once an element of $\textbf{A}_i$ is found, it may also be compared with the $i$ elements of $\textbf{A}_{i-1}$ which correspond to the intersections of all but 1 of the $i$ circles.
In fact, whenever $\mathcal{I}^{(i)}_{\{i_1,i_2,\ldots,i_i\}}$ is equal to $\mathcal{I}^{(i-1)}_{\{i_1,i_2,\ldots,i_{i-1}\}}$, it follows that the circle $\gamma_{i_i}$ fully contains the intersection of the other $i-1$ circles, and therefore all the exclusive intersections between subsets of circles containing $\gamma_{i_1},\gamma_{i_2},\ldots\gamma_{i_{i-1}}$ and not containing $\gamma_{i_i}$ can be immediately set to zero. In this manner, at the end of the computation of the non exclusive intersection areas, several of the non existing exclusive intersection areas have been also identified.

Having computed all the non exclusive intersection areas, the last step is to compute the exclusive ones.
Instead of using equation (\ref{ricompall}), the following relationship is also valid:
\begin{equation}
 \tilde{\textbf{A}}_i {=}\!\begin{cases}
                       \textbf{A}_i -\displaystyle\sum_{j=i+1}^{N_c}\textbf{M}_{i,j}^T\tilde{\textbf{A}}_j &i{<}N_c\\
			\textbf{A}_i &   i{=}{N_c}
                      \end{cases}
\end{equation}
which can be used recursively, from $\tilde{\textbf{A}}_{N_c}$ back to $\tilde{\textbf{A}}_1$. Although it seems that the same number of operations is involved, this is not actually true. This is due to the fact that the number of exclusive intersection areas is much lower, as said above.
Starting from the exclusive intersection of all the $N_c$ circles, which is known (being equal to the non exclusive one), the intersections among $N_c-1$ circles are retrieved by means of matrix $\textbf{M}_{N_c-1,N_c}^T$ through the equation defined above. However, some of these areas may result equal to zero. This information can be immediately exploited, by deleting the corresponding elements in the matrix $\textbf{M}_{N_c-2,N_c-1}^T$ (which can be written in the form of a list, as explained above). The result is that when computing the intersection areas among $N_c-2$ circles, only the required operations are performed, neglecting the subtractions of non existing areas.

Now, in order to compute the number of required operations, we refer to the symmetric case where all the $N_c$ circles have the same radius, and when the intersection between all of them is bounded by a circular polygon with $N_c$ sides, each belonging to a different circle. It is easy to verify that by changing the positions or the radii of the circles, the number of exclusive intersection areas cannot increase any more\footnote{This number actually does not change, as long as the hypothesis about the intersection area among all the $N_c$ circles holds, except for cases where three or more circles intersect in a single point.
However, these cases lead to a lower number of exclusive intersection areas; in addition, if the circles are randomly deployed, the probability of these configurations is 0.}.

In the configuration taken into account, it can be proved that each non exclusive intersection between $k$ circles contains $\binom{N_c-k+1}{2}+1$ exclusive intersection areas. The corresponding exclusive intersection area, therefore, can be computed through $\binom{N_c-k+1}{2}$ subtractions. The overall number of operations, in order to find all the exclusive intersection areas, is given by:
\begin{equation}
 \sum_{k=1}^{N_c-1}\binom{N_c}{k}\binom{N_c-k+1}{2} = \frac{1}{8}\left(2^{N_c}N_c(N_c+3) -4N_c(N_c+1)\right)
\end{equation}
which grows as fast as $N_c^22^{N_c}$.

We conclude that, in the worst case, where all the $N_c$ circles intersect each other and there are no nested circles, the complexity of the algorithm, in order to find all the $2^{N_c}$ non exclusive intersection areas and all the $N_c^2-N_c+2$ exclusive intersection ones, grows as $2^{N_c}N_c^2$.

\subsection{Comparison with Monte-Carlo approximation}
Other numerical methods can be used to approximate the values of the intersection areas. One of the most common ones is the Monte-Carlo approximation: a number $N_p$ of points is randomly chosen in the area where the circles are deployed. For each of them, the distances from the $N_c$ centers are computed, thus finding the intersection area it belongs to.
Since the probability of choosing a point in a given intersection area is equal to the ratio between this area and the total deployment area, the values of the intersection areas can be approximated using a high enough number of test points. 
Although the Monte--Carlo method is simpler to implement, we notice that $N_c$ distance computations and comparisons are needed for each deployed point. In addition, the precision of this method depends on the number of deployed points, especially if the ratio between the average circle radius and the side of the deployment area is small.
If this is the case, even checking the existence of the intersection areas between small circles may require a huge number of points. Indeed, for an area which is $p$ times smaller than the total deployment area, in order to find a value with a relative error equal to $\epsilon$ with probability $\Lambda$, the number $N_p$ of points needed can be approximated as follows.

When $N_p$ points are tested, each of them is within the desired area with probability $p$. Therefore, the total number of points which falls within the desired area is a binomial random variable, with mean $pN_p$. If $N_p$ is high enough, which is true in our case, then the binomial random variable can be well approximated with a Gaussian random variable, with mean $pN_p$ and variance $\sigma_x^2 = pN_p(1-p)$. Call this variable $X$. The condition is then:
\begin{equation}
 \mathbb{P}\left[(1-\epsilon)pN_p\leq X\leq (1+\epsilon)pN_p\right] = \Lambda
\end{equation}
By introducing $Y = X/\sigma_x - pN_p$, which is therefore a Gaussian random variable with zero mean and unit variance, we can rewrite the condition as:
\begin{equation}
 \mathbb{P}\left[-\frac{\epsilon pN_p}{\sqrt{pN_p(1-p)}}\leq Y\leq \frac{\epsilon pN_p}{\sqrt{pN_p(1-p)}}\right] = \Lambda
\end{equation}
from which, given the simmetry of the Gaussian pdf, we get:
\begin{equation}
 Q\left(\frac{\epsilon pN_p}{\sqrt{pN_p(1-p)}}\right) = \frac{1-\Lambda}{2}
\end{equation}
where $Q(\cdot)$ is the Gaussian Complementary Cumulative Distribution Function. Solving for $N_p$, we finally obtain the result:
\begin{equation}
 N_p = \frac{1-p}{\epsilon^2p}\left[Q^{-1}\left(\frac{1-\Lambda}{2}\right)\right]^2
\end{equation}

For $p = 0.1$, $\epsilon = 0.01$ and $\Lambda = 0.9$, the required points are almost 250000. The proposed algorithm, on the contrary, gives the exact values of all the non exclusive and exclusive intersection areas, even when the deployment area is much larger than the average circle area (in this case, indeed, the probability that all the circles are intersecting with each other is quite small, which further reduces the algorithm computational burden).

\section{Network Design Application}
\label{ex_netdes}
The algorithm presented in this paper may become a useful tool to determine distribution functions which would be hard to derive analytically. In this Section we show how this can help in network design problems, in order to determine the optimal setting of specific parameters.

Consider a wireless network where fixed Access Points are distributed in a given area, and a mobile terminal whose position is randomly chosen in the same area, with uniform distribution. Assume that the transmitted power of the mobile terminal is fixed and equal to $P_M$, and that a target SNR $\Gamma$ is required for decoding. With a commonly used approximation, consider a circular area around each access point as its coverage area. The radius of this area can be chosen based on the average SNR or on the outage probability. Once the channel model is determined, the radius depends only on the transmission power $P_M$.

It is clear that, if the power is high enough, there are regions covered by two or more access points. If we assume some sort of cooperation among the access points, users that are located in these areas can take advantage of the spatial diversity by transmitting to multiple access points. The network topology has a strong impact on the overall performance: if the considered area is fixed, increasing the density of access points makes it easier for a transmission to be decoded by several receivers, but also implies an increased network deployment cost.

In the following example, we assume flat Rayleigh fading, such that, if $W$ is the noise power, the SNR at distance $d$ from the receiver is given by the well known equation:
\be
 SNR(d) = \frac{P_M\sigma}{AW}d^{-\alpha}\left|h\right|^2
\ee
where $A$ is a pathloss factor, $\alpha$ is the pathloss exponent, and $h$ is the channel fading gain, distributed as a complex Gaussian random variable with zero mean and unit variance. The attenuation factor $\sigma$, due to shadowing effects, is here considered constant.
The radius $R$ of the coverage circles is defined as the distance at which the average SNR at the access point is equal to a given value $\Delta$. Therefore,
\be
 R = \left(\frac{P_M\sigma}{\Delta A W}\right)^{\frac{1}{\alpha}}
\ee
Alternatively, the radius can be determined based on a given outage probability, with no substantial difference in the results. Note also that, since fixed transmit power is assumed, the same analysis could be done for the downlink as well, with Base Stations cooperating in the transmission phase. Usually, however, the coverage bottleneck lies in the uplink, due to the reduced available power at the mobile terminals, thus we focus our attention on this scenario.

\begin{table}
  \caption{System parameters for the cellular-like topology.}
  \centering
  \begin{tabular}{|l|c|}
  \hline
    Noise power $W$ & $-103$dBm\\
    Path-loss exponent $\alpha$ & $3$\\
    Shadowing margin $\sigma$ & $10$ dB\\
    Fixed attenuation parameter $A$ & $30$ dB\\
    SNR threshold $\Delta$ for decoding radius & $10$ dB\\
    SNR threshold $\Gamma$ for outage probability & $10$ dB\\
    Quantization step $\rho$ & $R/50$ m\\
    \hline
  \end{tabular}
  \label{parametri2}
\vup\vup\vup
\end{table}
In this Section, we set the parameters as in Table \ref{parametri2}.
To analyze the performance, we compute the outage probability in the absence of interference, defined as the probability that the SNR is below the target value $\Gamma$. If the node is in the coverage area of two or more access points, we assume that a Maximum Ratio Combining strategy is applied. A very robust wired channel among the access points is assumed, and the fading is considered independent among different channels.
Therefore, the SNR of the transmission from a source to $n$ access points can be simply written as
\be
 SNR(d_1,d_2,\ldots,d_n) = \frac{P_M\sigma}{AW}\sum_{i=1}^nd_i^{-\alpha}\left|h_i\right|^2
 \label{distSNR}
\ee
which depends on the distances from all the access points within coverage range. The circumferences around the positions of the access points partition the deployment area $\mathcal{A}_D$ into $\mathcal{N}$ disjoint areas. Call them $\mathcal{A}_i$, with $i\in\{1,2,\ldots,\mathcal{N}\}$, and call $\varphi(i)$ the number of coverage circles $\mathcal{A}_i$ belongs to. If we call $\mathcal{P}[\mathcal{A}_i]$ the probability that the source node is located in $\mathcal{A}_i$, the global outage probability $\xi = \mathcal{P}[SNR \leq \Gamma]$ has the general expression:
\be
 \xi =  \sum_{i=1}^{\mathcal{N}}\mathcal{P}[\mathcal{A}_i]\int_0^R\!\!\ldots\!\int_0^R \!\mathcal{P}\left[SNR(\delta_1,\delta_2,\ldots,\delta_{\varphi(i)}) \leq\Gamma\right] f_{d_1,d_2,\ldots,d_{\varphi(i)}}^{(i)}(\delta_1,\delta_2,\ldots,\delta_{\varphi(i)}) d\delta_1 d\delta_2 \ldots d\delta_{\varphi(i)}
 \label{eqoutage}
\ee

Note that $\varphi(i)$ may also be equal to 0, if the chosen parameters and topology do not guarantee full coverage of the deployment area. For users located in these areas, we consider that the outage event has probability 1. In general, $\mathcal{N}$ depends on the selected topology, and grows quadratically with the number of circles. We point out that, if interference is also taken into account, the SNR should be replaced with the Signal to Interference-plus-Noise Ratio (SINR). In this case, however, also the interference term should be averaged, by integrating over the position of the interferer(s).
Although a simplified interference model may be used, we notice that this model should take into account the correlation between interference levels at all the Base Stations connected with the source of the useful signal. Such a model is beyond the scope of this paper.

Equation (\ref{eqoutage}) requires the knowledge of three distributions:
\begin{itemize}
 \item distribution of the source node position, which is uniform by assumption. This means that $\mathcal{P}(\mathcal{A}_i)$ is given by the ratio between the area of $\mathcal{A}_i$ and the area of the whole deployment area $\mathcal{A}_D$.
 \item distribution of the SNR, given the number of access points within transmission range and their distances from the source node. Once the distances are fixed, it follows from (\ref{distSNR}) that its distribution is the distribution of a finite sum of independent exponential random variables, whose parameters are related to the distances of the access points. This distribution is known.
 \item distribution $f_{d_1,d_2,\ldots,d_{\varphi(i)}}^{(i)}(\delta_1,\delta_2,\ldots,\delta_{\varphi(i)})$ of the distances between the source node and the access points within coverage range, dependent on the considered area $\mathcal{A}_i$. This distribution strongly depends on the shape of $\mathcal{A}_i$. Note that also for areas covered by only one access point, this distribution is no longer the distribution of the distance of a point randomly placed in a circular area from its center.
\end{itemize}

As to the topology, in this example we consider a cellular-like deployment of the access points, which are distributed in a hexagonal grid with side equal to $L$. Every other possible deployment is admissible; however, our choice leads to a lower computational burden due to the symmetries of the selected topology. In Figure \ref{fig:cerchiall}, an example of the considered topology is reported, for a fixed $L$ and $R$. The overall performance can be studied as a function of these two parameters, or equivalently, of $L$ and $P_M$. For the sake of power saving, lower values of $P_M$ are preferable, whereas, in order to reduce the network deployment cost, a higher $L$ is desirable. This in turn increases the outage probability, thus highlighting the need for a tradeoff.

\begin{figure}
    \centering
    \includegraphics[width=\figw]{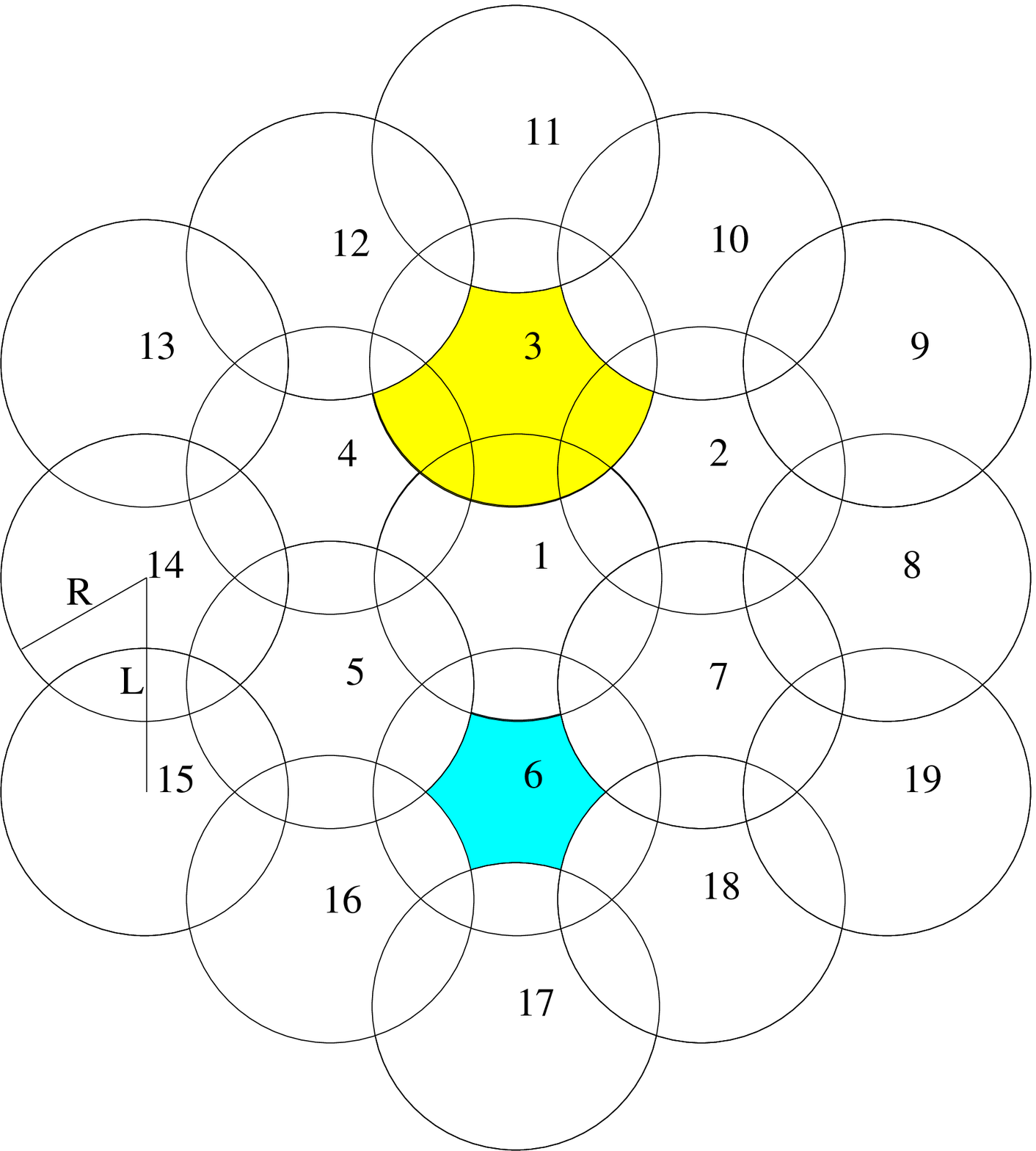}
     \caption{Example of plane tessellation for $\sqrt{3}/3L \leq R \leq \sqrt{3}/2L$. The shaded area around Access Point 3 represents the basic element of the tessellation. The shaded area around Access Point 6 is instead the portion of the area of each cell which is covered by only one Access Point.}
  \label{fig:cerchiall}
\vup\vup\vup\vup
\end{figure}

The proposed algorithm can be used to find $\mathcal{P}[\mathcal{A}_i]$ for each area $\mathcal{A}_i$. Moreover, it can also be used to find $f_{d_1,d_2,\ldots,d_{\varphi(i)}}^{(i)}(\delta_1,\delta_2,\ldots,\delta_{\varphi(i)})$, as we describe in the following.

Consider an area $\mathcal{A}_1$ covered by only one access point, for example the quasi hexagonal shaded area around access point 6 in Figure \ref{fig:cerchiall}. An analytical expression for the cumulative distribution function of the distance from the access point here is hardly derivable. However, we can approximate it by running the proposed algorithm several times, and properly selecting the radii of the circles. More specifically, we quantize the cdf with arbitarily small step $\rho$. To obtain the desired cdf, we keep the center of the circles and all the radii, except the radius of the circle around the access point 1, which is set to 0, and then increased by $\rho$ each time the algorithm is run. It is then sufficient to compute at each iteration the value of the area covered only by access point 1, and normalize it with the area of $\mathcal{A}_1$ to get the required cdf and, by numerical differentiation, the corresponding pdf. For areas covered by two or more circles, the same method is used, now 
properly varying the radii of the circles around the involved access points. Two and three-dimensional distributions are obtained for areas covered by two and three circles respectively. However, for areas covered by four or more circles, only three dimensional distributions are to be computed, since three distances uniquely determine the position of the source node, and hence all the other distances as well. Therefore, at most three integrals are to be numerically computed in (\ref{eqoutage}). Results become more and more precise for smaller values of the quantization step $\rho$, but the computational burden is also increased.

If we consider the cellular like distribution of Figure \ref{fig:cerchiall}, we can assume an infinite deployment area $\mathcal{A}_D$. In fact, due to the tessellation of the plane, the analysis can be limited to a finite region (the coloured region in Figure \ref{fig:cerchiall}), whose area can be regarded as the whole deployment area. With this choice, border effects are avoided, and a small number of circles are to be considered. The number of areas to be computed depends on the ratio between $L$ and $R$.

In Figure \ref{fig:probsucc_SNR10}, we report the success probability, that is, $\mathcal{P}[SNR \geq \Gamma]$, as a function of the transmission power $P_M$ and the distance $L$ between the access points, for $\Gamma = 10\spa\rm{dB}$. Once the target SNR $\Gamma$ is set, and the corresponding graph is plotted, it is possible to determine the required transmission power to achieve a given outage probability with a fixed access point density; vice versa, the minimum density to achieve the same outage probability can be found when the transmission power is instead fixed. Clearly, the success probability is higher when more power is available at the source, or when the density of the access points is increased. However, this comes at the cost of a higher number of required access points or of a shorter battery life of the source node. Depending on the relevance of these two costs, an objective function $g(.)$ could be also defined. Several choices are possible; as an example, the objective function may have the 
following form:
\be
 g(P_M, L) = \frac{1 - \xi(P_M, L)}{\mu P_M + \eta\mathcal{D}(L)}
 \label{valobie}
\ee
where the success probability appears at the numerator, and is derivable through (\ref{eqoutage}). At the denominator, $\mathcal{D}(L)$ is the access points density, equal to $2/(\sqrt{3}L^2)$, whereas $\mu$ and $\eta$ are normalizing constants, whose values can be properly selected. As an example, the value of $g$ as a function of $P_M$ and $L$ is depicted in Figure \ref{fig:probobie_SNR10} for specific values of $\mu$ and $\eta$. Although the success probability increases with $P_M$ and with $\mathcal{D}(L)$, the additional cost in terms of energy and number of deployed access points penalizes the choice of high $P_M$ and low $L$, thus identifying an optimal region.

\begin{figure}
    \centering
    \includegraphics[width=\figw]{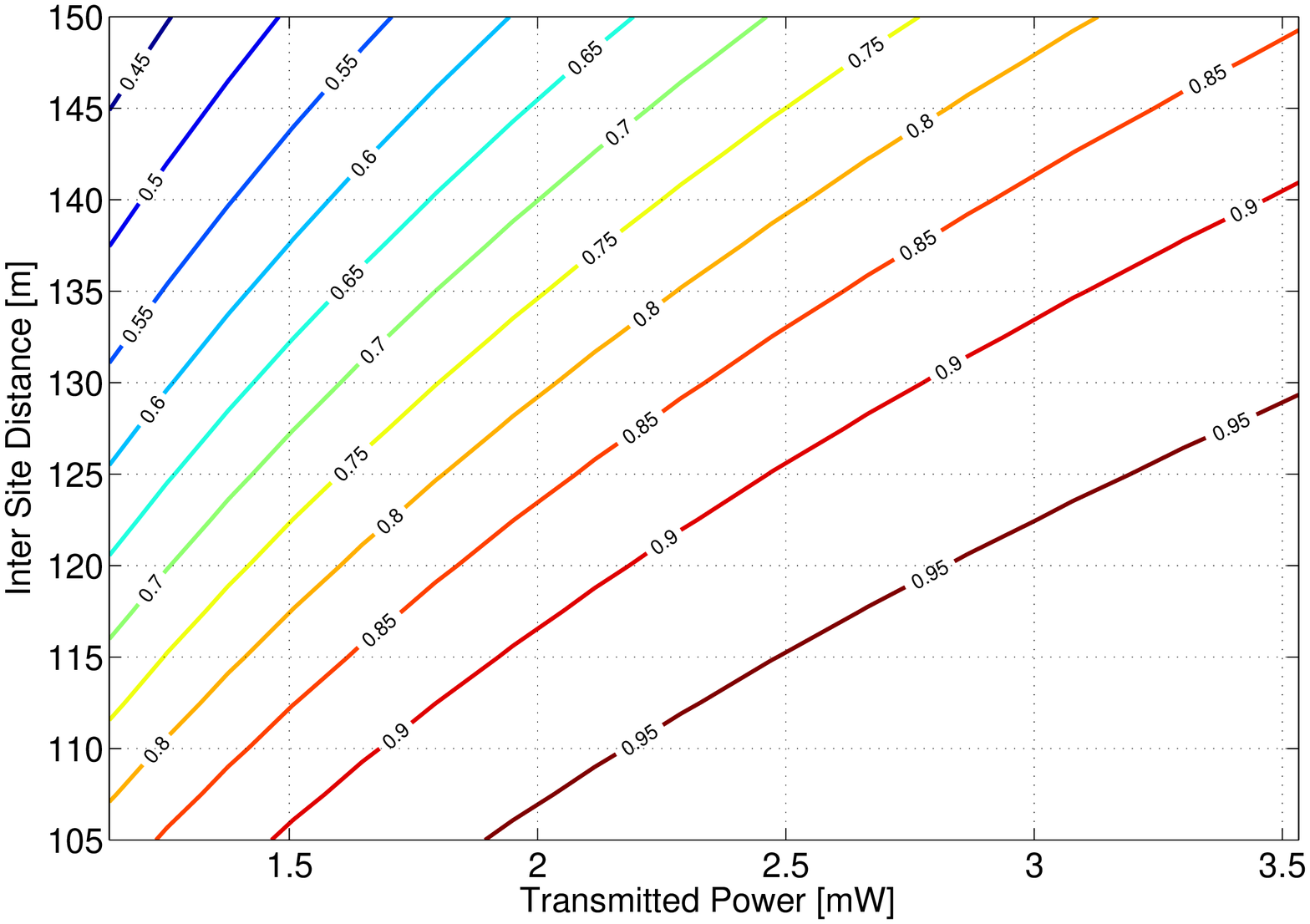}
     \caption{Success probability, as a function of $P_M$ and $L$. Here $\Gamma = 10\spa\rm{dB}$.}
  \label{fig:probsucc_SNR10}
\end{figure}

\begin{figure}
    \centering
    \includegraphics[width=\figw]{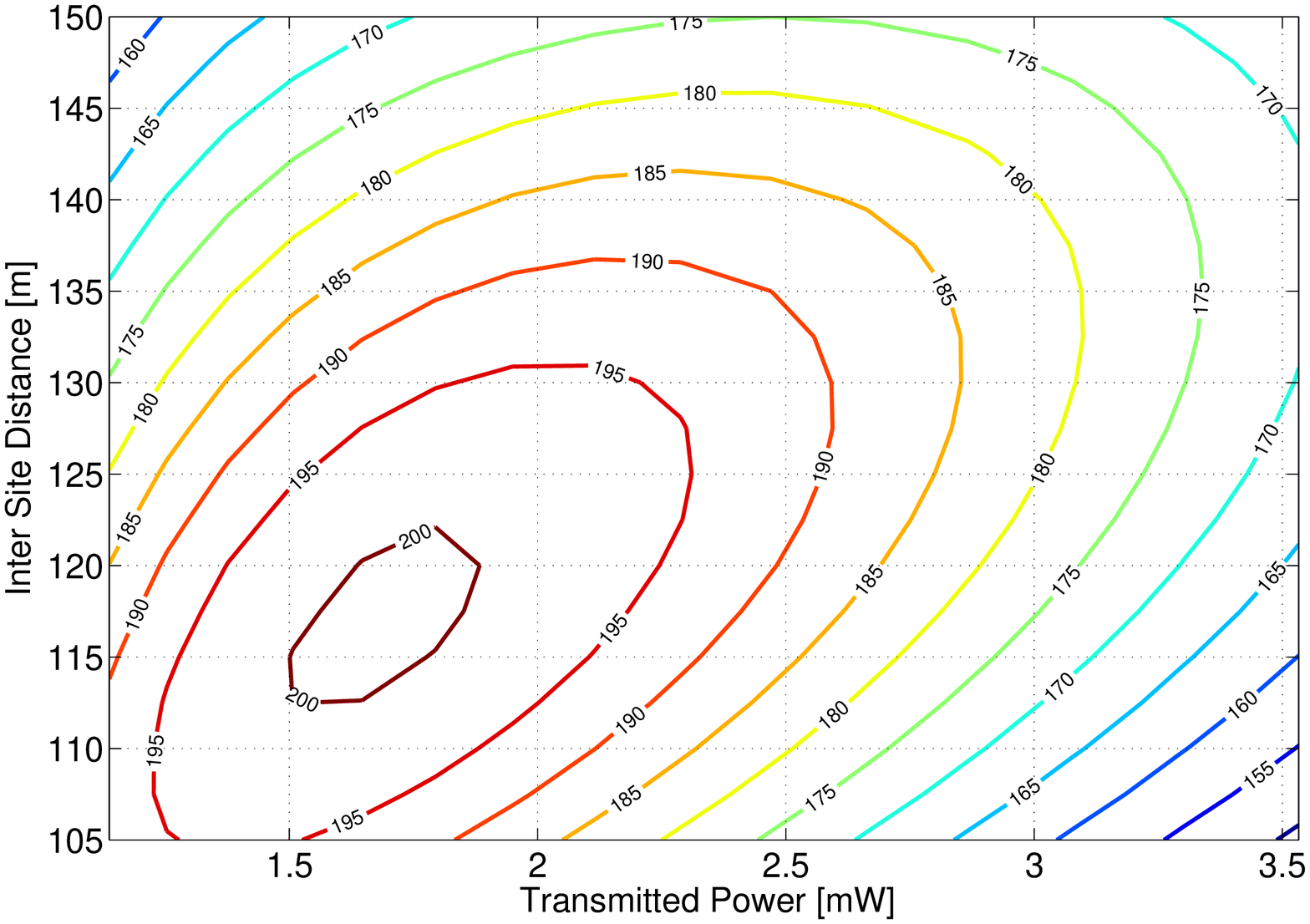}
     \caption{Values of the objective function $g$, as reported in (\ref{valobie}), as a function of $P_M$ and $L$. Here $\Gamma = 10\spa\rm{dB}$, $\mu = 1\spa\rm{W}^{-1}$, and $\eta = 30\spa\rm{m}^2$.}
  \label{fig:probobie_SNR10}
\end{figure}

\section{Conclusion}
In this paper, a practical algorithm for the computation of the intersection areas among any number of circles has been presented. The algorithm, based on two geometrical results, is designed to operate in an iterative manner, and takes advantage of a trellis structure to order and calculate all the required areas, given the radii and the mutual positions of the circles. An application of the algorithm has been presented in a network design problem, where cooperation is available among several access points. Our algorithm makes it possible to derive the distribution of the location of the source terminal, thus allowing the calculation of the outage probability as a function of the transmission power and the density of access points.

\appendices
\section{Proof of Theorem \ref{teo:intinun}}
\label{app:proof2}
\begin{proof}
Analogously to the proof of Theorem \ref{teo:existn}, we distinguish three cases:
\begin{enumerate}
 \item $\exists \{\gamma_{u_1},\ldots,\gamma_{u_w}\}{\subseteq} \mathcal{S}{:}\boldsymbol{\alpha}(u_1),\ldots,\boldsymbol{\alpha}(u_w) {=} \emptyset$;
 \item $\forall \gamma_i{\in} \mathcal{S} \; |\boldsymbol{\alpha}(i)| {=} 1$;
 \item $\forall \gamma_i{\in} \mathcal{S} \; \boldsymbol{\alpha}(i) {\neq} \emptyset$ and $\exists j:\; |\boldsymbol{\alpha}(j)| {>} 1$.
\end{enumerate}

We skip for now the first case, and focus on the following ones.

In the second case, each circle $\gamma_i$ has one arc belonging to $\Delta$. Since $m{\geq} 4$, there exist two circles $\gamma_t$, $\gamma_r{\in}\mathcal{S}$ such that $\alpha_t{\in}\boldsymbol{\alpha}(t)$ and $\alpha_r{\in}\boldsymbol{\alpha}(r)$ are non-consecutive sides of $\Delta$. Let us denote as
$P$ and $Q$ the points of intersection of $\gamma_t$ and $\gamma_r$. Note that the existence of $P$ and $Q$ is guaranteed by hypothesis, as $\mathcal{I}^{(n)}{\neq}\emptyset$. We also define $\alpha^r_t$ as the arc of $\Delta$ belonging to $\gamma_t$ and fully contained in $\gamma_r$,
and $\alpha^t_r$ as the arc of $\Delta$ belonging to $\gamma_r$ and fully contained in $\gamma_t$.

Moreover, there exists a circle $\gamma_h{\in}\mathcal{S}$ such that $P{\notin}\gamma_h$. In fact, if this circle did not exist,
then $P{\in}\mathcal{I}^{(n)}$ and the arcs $\alpha_r$ and $\alpha_t$ would be consecutive arcs of $\Delta$.
\begin{figure}
    \centering
    \includegraphics[width=\figwe]{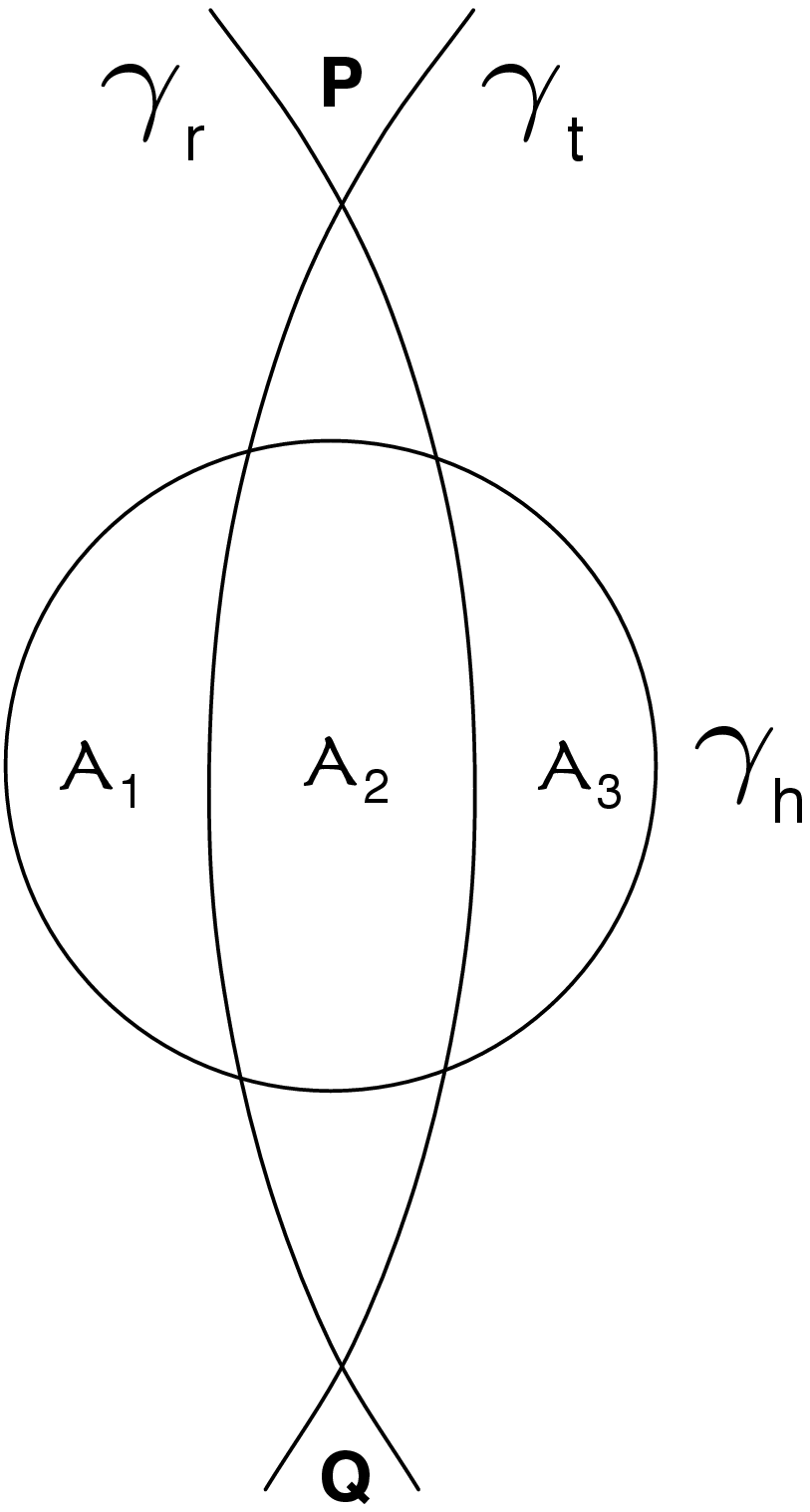}
     \caption{First configuration of the two possible when $m{=}n$, and each of the $n$ intersecting circles has exactly one arc of circumference belonging to the polygon $\Delta$. In the figure $\mathcal{A}_2$ contains $\Delta$.}
  \label{fig:sulteo21}
\end{figure}

We need to distinguish two configurations for point $Q$:
\begin{itemize}
\item In the first configuration, $Q$ does not belong to $\gamma_h$ (see Fig.~\ref{fig:sulteo21}). Therefore, the intersection of the circumferences of $\gamma_r$ and $\gamma_h$ belongs to the arc $\alpha^t_r$ and the intersection of the circumferences of $\gamma_t$ and $\gamma_h$
belongs to $\alpha^r_t$. As a consequence, $\gamma_h$ is divided into three disjoint areas $\mathcal{A}_1{\subset}\gamma_r$, $\mathcal{A}_2{\subset}\gamma_r{\cap}\gamma_t$ and $\mathcal{A}_3{\subset}\gamma_t$. We thus conclude that $\gamma_h{\subset}\gamma_r{\cup}\gamma_t$. Since, $\mathcal{I}^{(n-2)}(r,t)$ is contained in $\gamma_h$, then it is also included in $\gamma_r{\cup}\gamma_t$.
\item In the second configuration, $Q$ belongs to $\gamma_h$ (see Fig.~\ref{fig:sulteo22}). In this case there exists a $\gamma_x$ such that $Q{\notin}
\gamma_x$. If $P{\notin}\gamma_x$, the theorem can be proved as in the previous configuration. If $P{\in}\gamma_x$, then $\alpha_r^t$ and $\alpha_t^r$
belong to $\gamma_h{\cap}\gamma_x$, as they are sides of $\Delta$ and hence are part of $\mathcal{I}^{(n)}$. Thus, the points of intersection
between the circumferences of $\gamma_h$ and $\gamma_x$ cannot belong to $\gamma_r{\cap}\gamma_t$. It follows that $\gamma_r{\cap}\gamma_t$
is divided into three regions: $\mathcal{A}_1{\subset}\gamma_h$, $\mathcal{A}_2{\subset}\gamma_h{\cap}\gamma_x$ and $\mathcal{A}_3{\subset}\gamma_x$. Therefore, $\gamma_r{\cap}\gamma_t{\subset}\gamma_h{\cup}\gamma_x$, and since $\mathcal{I}^{(n-2)}(h,x){\subset}\gamma_r{\cap}\gamma_t$ the thesis is proved.
\end{itemize}
\begin{figure}
    \centering
    \includegraphics[width=\figw]{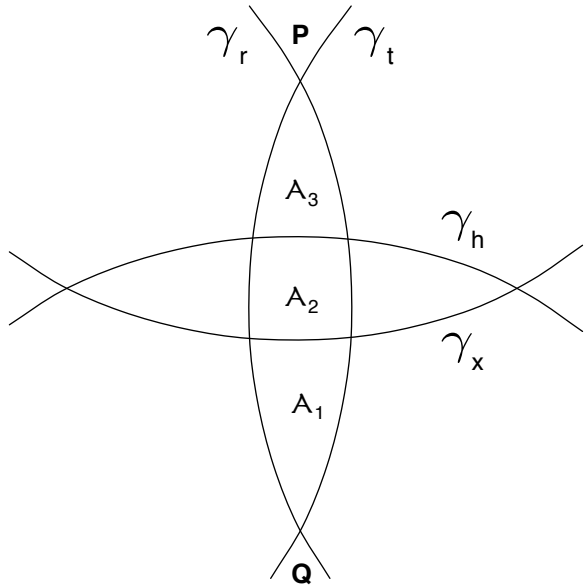}
     \caption{Second configuration of the two possible when $m{=}n$, and each of the $n$ intersecting circles has exactly one arc of circumference belonging to the polygon $\Delta$.}
  \label{fig:sulteo22}
\end{figure}

In the third case, consider the circle $\gamma_j$ with $|\boldsymbol{\alpha}(j)|{>}1$. In the proof of Theorem~\ref{teo:existn}, we showed that
the two arcs $\Delta$ adjacent to $\alpha_t{\in}	\boldsymbol{\alpha}(j)$ must belong to the circumference of two different circles, say $\gamma_h$
and $\gamma_x$. By construction, the arc $\alpha_t$ is contained in $\gamma_h{\cap}\gamma_x$. Moreover, also $\alpha_r{\in}\boldsymbol{\alpha}(j)$, with $\alpha_r{\neq}\alpha_t$, is contained in $\gamma_h{\cap}\gamma_x$, as $\alpha_r$ is an arc of $\Delta$. Therefore, $\gamma_j$ can be divided into three
disjoint regions analogous to those of the previous case, namely  $\mathcal{A}_1{\subset}\gamma_x$, $\mathcal{A}_2{\subset}\gamma_x{\cap}\gamma_h$ and $\mathcal{A}_3{\subset}\gamma_h$, which implies $\gamma_j{\subset}\gamma_x{\cup}\gamma_h$. Since $\mathcal{I}^{(n-2)}(x,h){\subset}\gamma_j$, the thesis in the third case is proved.

We now conclude the proof with the first case. Here, there exist $w$ circles containing $\mathcal{I}^{(n)}_{\{i_1,\ldots,i_n\}}$, with $1{\leq} w{\leq} n{-}3$. Define the subset $\mathcal{S}^{(n-w)}{=}\mathcal{S}{\setminus}\{\gamma_{u_1},\ldots,\gamma_{u_{n-w}}\}{\subset}\mathcal{S}$ that contains the circles $\gamma_j$ such that $\boldsymbol{\alpha}(j) {\neq} \emptyset$, and their
intersection $\mathcal{I}^{(n-w)}(u_1,\ldots,u_{n-w})$. Thus, there are $n{-}w$ circles with a non-empty set of arcs $\boldsymbol{\alpha}(j)$ and
$w$ circles that fully contain $\mathcal{I}^{(n)}_{\{i_1,\ldots,i_n\}}$. This case is equivalent to the second
or third case, if we consider only the circles with a non-empty set of arcs of $\Delta$. Since we have already shown that the theorem holds in those cases, we have here that $\exists \gamma_t$, $\gamma_r{\in}\mathcal{S}^{n-w}:\; \mathcal{I}^{(n-w-2)}(t,r) {\subset} \gamma_t{\cup}\gamma_r$. Therefore, 
\begin{equation}
\mathcal{I}^{(n-2)}(r,t) \subset \mathcal{I}^{(n-w-2)}(r,t) \subset \gamma_r{\cup}\gamma_t,
\end{equation}
In fact, $\mathcal{I}^{(n-2)}(r,t)$ is equal to the intersection between $\mathcal{I}^{(n-w-2)}(r,t)$  and $\bigcap_{i\in\mathcal{S}\setminus\mathcal{S}^{(n-w)}}\gamma_i$, and is thus a subset of $\mathcal{I}^{(n-w-2)}(r,t)$. That proves the theorem in this case.
Note that $w{\leq}n{-}3$, in fact in order to have
$m{\geq}4$ arcs, we need $\Delta$ to be bounded by at least three circles. 
\end{proof}

\section{Derivation of $\textbf{A}_{N_c}$ for $N_c = 4$}
\label{app:specialcase}
We describe here how $\textbf{A}_{N_c}$ can be computed when $N_c=4$ and when an additional geometric check is necessary.
The available vectors in this case are $\hat{\textbf{A}}_1$, $\hat{\textbf{A}}_2$ and $\hat{\textbf{A}}_3$, whose expressions have been reported in (\ref{varia}). It is useful to consider $\hat{\textbf{A}}_2^* = - \hat{\textbf{A}}_2$. In this manner, all the elements of $\hat{\textbf{A}}_2^*$ are smaller than or equal to $\mu$, whereas all the elements of $\hat{\textbf{A}}_1$ and $\hat{\textbf{A}}_3$ are greater than or equal to $\mu$. Therefore, we define
\be
 a_{\gamma} = \min\left(\hat{\textbf{A}}_1\right),\:
 b_{\gamma} = \max\left(\hat{\textbf{A}}_2^*\right),\:
 c_{\gamma} = \min\left(\hat{\textbf{A}}_3\right)
\ee
According to the considerations reported above, if $m\geq 4$ at least one of the exclusive intersection areas between 2 circles is 0, and $\mu$ is simply equal to $b_{\gamma}$. The same holds also if $m = 1$ and $m = 2$, so the only case that must be studied is when $m$ is equal to 3.
\begin{figure}
    \centering
    \includegraphics[width=\figw]{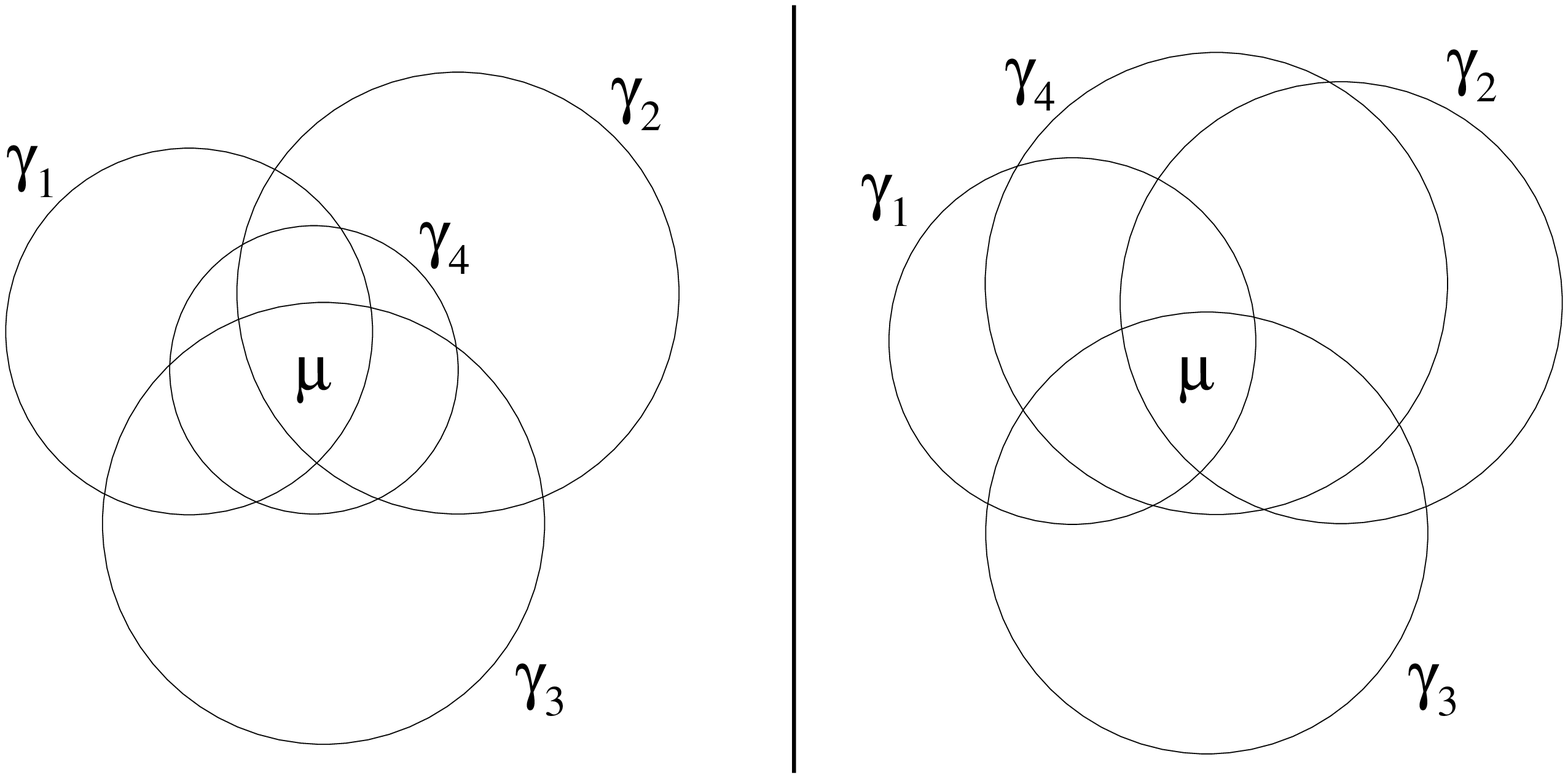}
     \caption{Possible deployments of 4 circles when $m$ is equal to 3. On the left, $\gamma_4$ is included in the union of the other three circles, whereas on the right this is not true.}
  \label{fig:casoraro}
\end{figure}
Assume, without loss of generality, that the circle $\gamma_4$ contains the intersection of $\gamma_1$, $\gamma_2$ and $\gamma_3$. Then, there are two possible cases, as reported also in Figure \ref{fig:casoraro}:
\begin{itemize}
 \item $\gamma_4$ is contained in the union of the other three circles. In this case, $\mu^*_4 = 0$, and hence $\mu = a_{\gamma}$;
 \item $\gamma_4$ is not contained in the union of the other three circles. In this case, it can be shown that it fully contains the exclusive intersection of two circles, meaning that $\mu = b_{\gamma}$.
\end{itemize}

Moreover, since the intersection area of $\gamma_1$, $\gamma_2$ and $\gamma_3$ is included in $\gamma_4$, also $\mu^*_{1,2,3} = 0$, and in both cases also $\mu = c_{\gamma}$. In this manner, we have proved that in any case the unknown value of $\mu$ is equal to one of the three values $a_{\gamma}$, $b_{\gamma}$ or $c_{\gamma}$, as summarized also in Table \ref{tabvalues}.
\begin{table}
 \centering
 \caption{Relationships among $a_{\gamma}$, $b_{\gamma}$ and $c_{\gamma}$, depending on the positions of the circles.}
\begin{tabular}{l|c|c}
\hline
Case & Ordering & Value of $\mu$ \\
\hline
$m \neq 3$ & $a_{\gamma},c_{\gamma}\geq b_{\gamma}$ & $b_{\gamma}$ \\
$m = 3$ and $\gamma_4 \subset \bigcup_{i=1}^3\gamma_i$ & $a_{\gamma}=c_{\gamma}\geq b_{\gamma}$ & $a_{\gamma}(=c_{\gamma})$ \\
$m = 3$ and $\gamma_4 \not\subset \bigcup_{i=1}^3\gamma_i$ & $a_{\gamma} \geq b_{\gamma} = c_{\gamma}$ & $b_{\gamma}(=c_{\gamma})$ \\
\hline
\end{tabular}
\label{tabvalues}
\end{table}
However, it is still to be determined how the algorithm can recognize which one of the three terms is the actual value. In most cases, this can be inferred by the relationships among their values. We can distinguish the following exhaustive possibilities:
\begin{itemize}
 \item $a_{\gamma} \neq b_{\gamma} \neq c_{\gamma}$: in this case, looking at Table \ref{tabvalues}, it follows that $m \neq 3$, and therefore $\mu = b_{\gamma}$;
 \item $a_{\gamma} = b_{\gamma} \neq c_{\gamma}$ or $a_{\gamma} \neq b_{\gamma} = c_{\gamma}$: in both these cases, recalling that $a_{\gamma},c_{\gamma} \geq \mu$ whereas $b_{\gamma}\leq\mu$, it follows that necessarily $\mu = b_{\gamma}$;
 \item $a_{\gamma} = b_{\gamma} = c_{\gamma}$: as in the previous case;
 \item $a_{\gamma} = c_{\gamma} \neq b_{\gamma}$: this is the only case where it is not possible to determine whether $\mu = b_{\gamma}$ or $\mu = a_{\gamma}$. In fact, looking at Table \ref{tabvalues}, this may happen both if $m \neq 3$ and if $m = 3$ and $\gamma_4 \subset \bigcup_{i=1}^3\gamma_i$: in the former case, $\mu = b_{\gamma}$, in the latter instead $\mu = a_{\gamma}$.
\end{itemize}

It is clear that, in the last case, it is very unlikely that $m \neq 3$, since this would mean that the smallest exclusive intersection among 1 circle has the same (nonzero) area as the smallest exclusive intersection of three circles (since $a_{\gamma} = c_{\gamma}$). Anyway, since this may happen, it is necessary to determine the value of $m$ in a different way.

\begin{figure}
    \centering
    \includegraphics[width=\figwe]{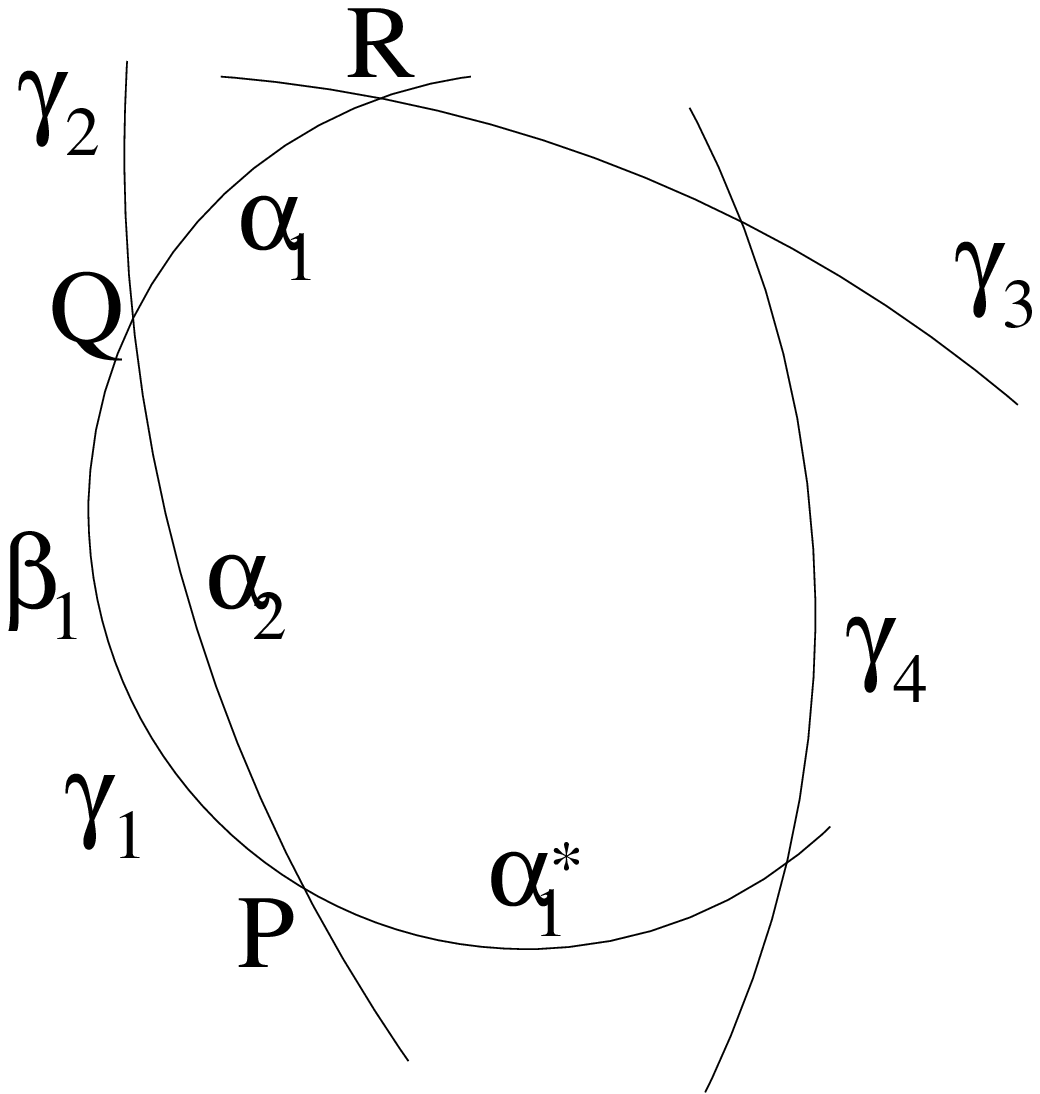}
     \caption{Intersection of 4 circles, with $m = 5$.}
  \label{fig:interm5}
\end{figure}

We first of all calculate which values of $m$ may effectively result in $a_{\gamma} = c_{\gamma} \neq b_{\gamma}$. Since it can be shown that $1\leq m\leq 2(N_c-1)$, for $N_c = 4$ we have $1\leq m\leq 6$. We exclude the following cases:
\begin{itemize}
 \item $m = 1$: in this case $\exists i$: $\gamma_i \subset \gamma_k$, $\forall k \in \{ 1,2,3,4 \}$, $k\neq i$. This implies that $\mu^*_i = 0$. Moreover, $\forall j\neq i$, we have $\gamma_i \cap\gamma_j \subset \gamma_k$, $\forall k \in \{1,2,3,4\}$, $k\neq i,j$. Hence, also $\mu^*_{i,j} = 0$, which implies that $a_{\gamma} = b_{\gamma}$.
 \item $m = 2$: in this case $\exists i,j$, $i\neq j$: $\gamma_i \cap \gamma_j \subset \gamma_k$, $\forall k \in \{1,2,3,4\}$, $k \neq i,j$. This means that $\mu^*_{i,j} = 0$. In addition, $\forall k \neq i,j$, it is also true that $\gamma_i \cap \gamma_j \cap \gamma_k \subset \gamma_p$, $\forall p \in \{1,2,3,4\}$, $p\neq i,j,k$, which in turn implies that also $\mu^*_{i,j,k} = 0$. As a consequence, $b_{\gamma} = c_{\gamma}$.
 \item $m = 4$, with 2 sides belonging to the same circle: in this case Theorem \ref{teo:intinun} holds, meaning that there exists one exclusive intersection among two circles with zero area. Moreover, it is clear that there exists a circle $\gamma_i$ that fully contains the intersection of the other three circles (the ones whose arcs delimit the intersection area of all the four circles), meaning that $\mu^*_{j,k,p} = 0$, with $j,k,p \in \{1,2,3,4\}$, and $i \neq j \neq k \neq p$. Therefore $b_{\gamma} = c_{\gamma}$.
 \item $m = 5$: here again Theorem \ref{teo:intinun} holds, meaning that $\exists i,j \in \{1,2,3,4\}$, $i \neq j$: $\mu^*_{i,j} = 0$. In addition, as reported in Figure \ref{fig:interm5}, it is clear that two non consecutive arcs $\alpha_1$ and $\alpha_1^*$ delimiting the intersection area of the four circles belong to the same circle, say $\gamma_1$. Assume that the arc between them, namely $\alpha_2$, belongs to $\gamma_2$. This arc divides the circle $\gamma_1$ in two parts. The one containing the intersection of the four circles is fully contained in $\gamma_2$, since the two circumferences $\hat{\gamma}_1$ and $\hat{\gamma}_2$ cannot intersect in more than two points, and they already intersect in $P$ and $Q$. The other part of $\gamma_1$ is contained in $\gamma_3$ (as well as in $\gamma_4$). In fact, if this were not true, since $P$ and $Q$ both belong to $\gamma_3$, the arc $\beta_1$ should intersect the circumference $\hat{\gamma}_3$ in two points. Since $\hat{\gamma}_1$ and $\hat{\gamma}_3$ already 
intersect in $R$, this would cause them to intersect in more than two points,which is not possible. Therefore, $\gamma_1 \subset \gamma_2 \cup \gamma_3$, and $\mu^*_1 = 0$, meaning that $a_{\gamma} = b_{\gamma}$.
 \item $m = 6$: the same reasoning as for $m = 5$ can be done in this case, again resulting in $a_{\gamma} = b_{\gamma}$.
\end{itemize}

Having excluded all the above listed cases, there are only two possible deployments that may result in $a_{\gamma} = c_{\gamma} \neq b_{\gamma}$: either $m = 3$ or $m = 4$ with all the four arcs belonging to different circles. They are reported in Figure \ref{fig:last2cases}.
\begin{figure}
    \centering
    \includegraphics[width=\figwd]{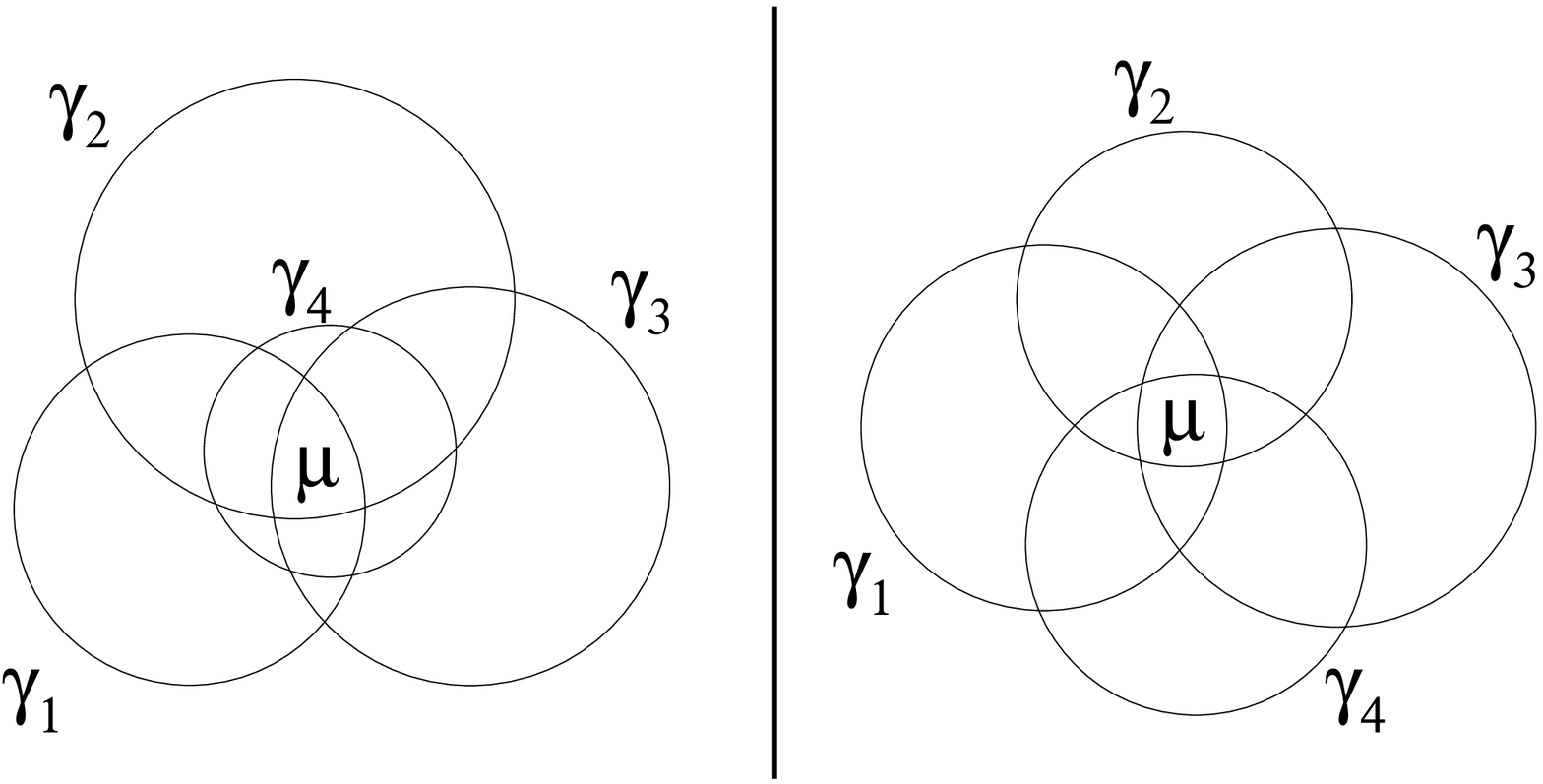}
     \caption{The only two possible deployments of 4 circles such that $a_{\gamma} = c_{\gamma} \neq b_{\gamma}$.}
  \label{fig:last2cases}
\end{figure}

Recall that if $m = 3$, the only possible case is the one reported in Figure \ref{fig:casoraro} on the left (and depicted also in Figure \ref{fig:last2cases} on the left). The deployment depicted in Figure \ref{fig:casoraro} on the right is instead not possible, since it implies $b_{\gamma} = c_{\gamma}$, as reported also in Table \ref{tabvalues}.

From the previous investigation, we can conclude that when $a_{\gamma} = c_{\gamma} \neq b_{\gamma}$, the value of $\mu$ is $a_{\gamma}$ if the circles are deployed as in Figure \ref{fig:last2cases} on the left, and is equal to $b_{\gamma}$ if the circles are deployed as in Figure \ref{fig:last2cases} on the right. No other deployments are compatible with the given inequality among $a_{\gamma}$, $b_{\gamma}$ and $c_{\gamma}$. The straightforward way to distinguish between the two cases is to calculate all the twelve intersection points between the four circles. In both situations each circle contains exactly three points of intersection between the other three circles. More precisely: $\forall i \in \{1,2,3,4\}$, $\exists P,Q,R$: $P \in \hat{\gamma}_j\cap\hat{\gamma}_k$, $Q \in \hat{\gamma}_j\cap\hat{\gamma}_p$ and $R \in \hat{\gamma}_k\cap\hat{\gamma}_p$, with $j,k,p \in \{1,2,3,4\}$, and $i$, $j$, $k$ and $p$ all different from each other, such that $P,Q,R \in \gamma_i$, where again $\hat{\gamma}_i$ 
indicates the circumference of circle $\gamma_i$.

However, if $m = 3$ there is one circle that contains the intersection area of the other three circles, meaning that there is one (and only one) circle $\gamma_i$ such that, using the same notation as above, $P \in \gamma_p$, $Q \in \gamma_k$ and $R \in \gamma_j$. This is not true when $m = 4$, since in that case, one of the three points contained in each circle $\gamma_i$, given by the intersection of two circumferences $\hat{\gamma}_j$ and $\hat{\gamma}_k$, does not belong to the third circle $\gamma_p$. Simple geometric comparisons among the distances between centers and intersection points and the radii of the circles are then enough to distinguish the two cases and, finally, determine the correct value of $\mu$. Note that this check is necessary only for $N_c = 4$, and only when $a_{\gamma} = c_{\gamma} \neq b_{\gamma}$.

\bibliographystyle{IEEEtran}
\bibliography{IEEEabrv,wibib.bib}
\end{document}